\newcommand{\N}{\mathbb{N}}
\newcommand{\C}{\mathbb{C}}
\newcommand{\R}{\mathbb{R}}
\newcommand{\Z}{\mathbb{Z}}
\newcommand{\ena}[1]{\mathrm{e}^{#1}}
\DeclareMathOperator{\arctanh}{\mathrm{arctanh}}
\DeclareMathOperator{\sgn}{\mathrm{sgn}}
\newcommand{\dd}{\mathrm{d}}
\newcommand{\dom}{\mathrm{Dom}}
\newcommand{\ran}{\mathrm{Ran}}
\newcommand{\tr}{\mathrm{Tr}}
\newtheorem{theorem}{Theorem}[section]
\newtheorem{corollary}[theorem]{Corollary}
\newtheorem{proposition}[theorem]{Proposition}
\theoremstyle{definition}
\newtheorem{remark}[theorem]{Remark}
\newtheorem{example}[theorem]{Example}
\newtheorem{definition}[theorem]{Definition}
\numberwithin{equation}{section}
\title[Relativistic $\delta$-shell interaction supported on a straight line]{Two-dimensional Dirac operators with general $\delta$-shell interactions supported on a straight line}
\date{August 26, 2022}
\author[J. Behrndt]{Jussi Behrndt}
\address{Institut f\"{u}r Angewandte Mathematik\\
Technische Universit\"{a}t Graz\\
 Steyrergasse 30, 8010 Graz, Austria\\
E-mail: {\tt behrndt@tugraz.at}}
\author[M. Holzmann]{Markus Holzmann}
\address{Institut f\"{u}r Angewandte Mathematik\\
Technische Universit\"{a}t Graz\\
 Steyrergasse 30, 8010 Graz, Austria\\
E-mail: {\tt holzmann@math.tugraz.at}}
\author[M. Tu\v{s}ek]{Mat\v{e}j Tu\v{s}ek}
\address{Department of Mathematics, Faculty of Nuclear Sciences and Physical Engineering\\
Czech Technical University in Prague \\
Trojanova 13, 120 00, Prague \\
E-mail: {\tt matej.tusek@fjfi.cvut.cz}}
\begin{document}

\begin{abstract}
In this paper the two-dimensional Dirac operator with a general hermitian $\delta$-shell interaction supported on a straight line is introduced as a self-adjoint operator and its spectral properties are investigated in detail. In particular, it is demonstrated that the singularly continuous spectrum is always empty and that by switching a certain $\delta$-shell interaction on, it is possible to generate an eigenvalue in the gap of the spectrum of the free operator or to partially or even fully close the gap. This suggests that the studied operators may serve as interesting continuum toy-models for Dirac materials. Finally, approximations by Dirac operators with regular potentials are presented.
\end{abstract}

\maketitle

\section{Introduction}
In this paper we study the two-dimensional Dirac operator formally given by 
\begin{equation} \label{eq:formal}
\hat{H}_{\eta,\tau,\lambda,\omega} := \sigma_1(-i\partial_x)+\sigma_2(-i\partial_y)+\sigma_3 m+(\sigma_0 \eta+\sigma_3\tau+(\sigma \cdot {\bf t}) \lambda+(\sigma \cdot {\bf n}) \omega)\delta_\Sigma,
\end{equation}
where $m,\eta,\tau,\lambda,\omega\in\R$, $\sigma_{i}$ with $i=1,2,3$ are the usual Pauli matrices defined in~\eqref{def_Pauli} below, $\sigma_0$ is the $2\times 2$ identity matrix, $\delta_\Sigma$ is the Dirac $\delta$-distribution supported on the straight line $\Sigma:=\{(0,y)|~ y\in\R\}$, and ${\bf t}$ and ${\bf n}$ are the tangential and normal vector along $\Sigma$, respectively; the notation $\sigma \cdot x = \sigma_1 x_1 + \sigma_2 x_2$ was used for a vector $x=(x_1,x_2)$. The singular term is called the $\delta$-shell potential. Since $\{\sigma_i\}_{i=0}^3$ constitutes a basis of the space of $2\times 2$ hermitian matrices, we deal with a general hermitian $\delta$-shell interaction. The first two interaction terms 
$\sigma_0 \eta$ and $\sigma_3\tau$
in \eqref{eq:formal} are referred to as the electrostatic and the Lorentz scalar interaction, respectively. The two remaining terms $(\sigma \cdot {\bf t}) \lambda$ and $(\sigma \cdot {\bf n}) \omega$ were recently related to the magnetic interaction \cite{CaLoMaTu_21}. The expression in~\eqref{eq:formal} is rigorously defined as a differential operator in an $L^2$-space with the help of transmission conditions for the functions in the operator domain.

The operator in~\eqref{eq:formal} may serve as a toy-model for graphene and similar Dirac materials. In fact, Dirac operators with scaled magnetic potentials that converge in the distributional sense to $(\sigma \cdot {\bf t}) \lambda \delta_\Sigma$ have been studied before as models of graphene with  very localized magnetic barriers, see for example \cite{FoGuKa_08,MaVaPe_09,PeNe_09}.
Another application is related to the fact that there may appear defects in the synthesised graphene, grain boundaries being the most common ones \cite{LeKiBr_12}. For a line defect, the corresponding continuum model was derived in \cite{RoPeLo_12}, see also \cite{RoPeLo_13,Ro_16}.  It coincides with the model studied in the current paper in the sense that the transmission condition deduced there is the same as the transmission condition for functions in the domain of $\hat{H}_{\eta,\tau,\lambda,\omega}$. 
Let us stress that most of the cited papers are concerned with scattering issues. In this paper we want to study, for the first time, the self-adjointness and the spectral properties of the two-dimensional Dirac operator with a general $\delta$-shell interaction supported on a straight line.

Concerning the mathematical study of the operator in~\eqref{eq:formal}, it turns out that $\hat{H}_{\eta,\tau,\lambda,\omega}$ is always unitarily equivalent to an operator of the same type but with $\omega=0$; cf. Section~\ref{sec:unitary_eq} and \cite{CaLoMaTu_21}. Therefore, it is sufficient to study a three-parametric family of operators, which will be denoted by $\hat{H}_{\eta,\tau,\lambda} := \hat{H}_{\eta,\tau,\lambda, 0}$. The operator $\hat{H}_{\eta,\tau,\lambda}$ was originally studied for $\Sigma$ being a smooth closed non-self-intersecting curve and $\lambda=0$ in \cite{BHOP20}. Subsequently, a general hermitian $\delta$-shell interaction supported on the closed curve with possibly non-constant coupling constants was introduced and analysed in \cite{CaLoMaTu_21}. In particular, $\hat{H}_{\eta,\tau,\lambda}$ was proven to be self-adjoint in $L^2(\R^2;\C^2)$ when $\lambda=0$ or if 
\begin{equation} \label{eq:criticality}
 \left(\frac{\eta^2-\tau^2-\lambda^2}{4}-1\right)^2-\lambda^2\neq 0
\end{equation}
using advanced boundary triple techniques.
We say that we are in the non-critical case if~\eqref{eq:criticality} holds true. In the opposite (critical) case, it is known that the functions in the domain of $\hat{H}_{\eta,\tau,\lambda}$ are less regular than in the non-critical case. Note that for $\lambda=0$, the criticality condition yields $\eta^2-\tau^2=4$ and, in the complementary case when $\eta=\tau=0$, it yields $\lambda=\pm 2$. The case $(\eta,\tau,\lambda)=(0,0,\pm 2)$ is both critical and confining, where the latter means that $\hat{H}_{\eta,\tau,\lambda}$ decouples into a direct sum of two operators acting on functions defined on the open domains in $\R^2$ that are separated by $\Sigma$. This happens if and only if $\eta^2-\tau^2-\lambda^2=-4.$
From a physical point of view, confinement means that the $\delta$-potential is impenetrable for particles.
The self-adjointness of $\hat{H}_{0,0,\pm 2}$ was shown differently in \cite{Sch_95} employing the supersymmetric structure of the operator.

Let us briefly summarize the spectral properties of $\hat{H}_{\eta,\tau,\lambda}$ that were obtained in \cite{BHOP20,CaLoMaTu_21} when $\Sigma$ is a closed curve. In the non-critical case,  the essential spectrum is given by
$$\sigma_{\textup{ess}}(\hat{H}_{\eta,\tau,\lambda})=(-\infty,-|m|]\cup[|m|,+\infty)$$
and the discrete spectrum is finite. If $\lambda=0$ and $\eta^2-\tau^2=4$ then $$\sigma_{\textup{ess}}(\hat{H}_{\eta,\tau,0})=(-\infty,-|m|]\cup\left\{-\frac{\tau}{\eta}m\right\}\cup[|m|,+\infty).$$ Finally, for $(\eta,\tau,\lambda)=(0,0,\pm 2)$, $\sigma(\hat{H}_{0,0,\pm 2})=(-\infty,-|m|]\cup[|m|,+\infty)$, $\pm m$ are eigenvalues of infinite multiplicity, and there is a sequence of embedded eigenvalues $\{\pm\sqrt{m^2+\lambda_k}\}_{k\in\N}$, where the $\lambda_k$'s are the eigenvalues of the Dirichlet Laplacian on the bounded domain enclosed by $\Sigma$.

If $\Sigma$ is an unbounded curve, it is not clear how to prove the self-adjointness of $\hat{H}_{\eta,\tau,\lambda}$. We view the current model as an explicitly solvable example of what spectral behaviour may be expected for the two-dimensional Dirac operator with a $\delta$-shell interaction supported on an unbounded curve. So far, there are only the papers \cite{R21, R22} dealing with the non-critical case under quite general assumptions that contain less explicit results, and the publication \cite{BeHoTu_21} by the authors of the present paper for the case when $\Sigma$ is a straight line and $\tau=\lambda=0$. 
The spectrum of $\hat{H}_{\eta,0,0}$ may be now very different from the case when $\Sigma$ was a closed curve.  In the non-critical case, the gap $(-|m|,|m|)$ is partially closed by either negative or positive energies and the spectrum is purely continuous, see \eqref{eq:spec_es} or \cite[Theorem~1.1]{BeHoTu_21}. In the critical case, i.e., for $\eta=\pm 2$, 
$$\sigma(\hat{H}_{\pm 2,0,0})=(-\infty,-|m|]\cup\{0\}\cup[|m|,+\infty),$$ 
where the point $0$ is an isolated eigenvalue of infinite multiplicity and the rest of the spectrum is purely continuous. Hence, when the interaction strength changes from a non-critical to a  critical interaction strength, then a part of the continuous spectrum collapses into an eigenvalue of infinite multiplicity.

In the present paper, we will employ the partial Fourier transform and work with the direct integral of a family of one-dimensional Dirac operators. 
In more detail, we will start with the free two-dimensional Dirac operator, whose spectrum is purely absolutely continuous and equal to $(-\infty,-|m|]\cup[|m|,+\infty)$, and decompose it into the direct integral of  the following fiber operators
\begin{equation} \label{eq:fib_op}
H[k]=\sigma_1(-i\partial_x)+\sigma_2k+\sigma_3 m,
\end{equation}
which are nothing but the one-dimensional Dirac operators perturbed by the hermitian term $\sigma_2 k$, where $k\in\R$ is the momentum in the $y$-direction. Then we add the standard one-dimensional point interaction at $x=0$ to every $H[k]$. The associated operators are known to be self-adjoint. Finally, we introduce the two-dimensional Dirac operator with a $\delta$-shell interaction supported on the line $\Sigma$ as the direct integral of these operators. The resulting operator is self-adjoint by construction. Moreover, its spectral properties follow from those of its fibers. The spectra of the one-dimensional Dirac operators with point interactions have been studied before in \cite{BeDa_94}, but now we have to include an additional perturbation, namely  $\sigma_2 k$. We will conclude that there are at most two discrete eigenvalues in the spectrum of each of the one-dimensional fibers and the rest of its spectrum is purely absolutely continuous, see Theorem \ref{theo:spectra_fibers}. Using an abstract criterion for the absence of singular continuous spectrum of direct integrals of self-adjoint operators
(see Appendix~\ref{section_appendix}) we  will show that the spectrum of the full operator $\hat{H}_{\eta,\tau,\lambda}$ may be very different from both the spectrum of the free operator and the spectrum of the Dirac operator with a $\delta$-shell interaction supported on a closed curve. For certain special values of the coupling parameters, there is  an isolated or embedded eigenvalue of infinite multiplicity in the spectrum of $\hat{H}_{\eta,\tau,\lambda}$ and the rest of the spectrum is purely absolutely continuous. For the remaining values of the coupling parameters, the spectrum is purely absolutely continuous. The gap $(-|m|,|m|)$ in the spectrum of the free operator may be fully or partially closed; cf. Theorem \ref{theo:full_spectrum} for details.
A part of the spectrum of $\hat{H}_{\eta,\tau,\lambda}$ is due to the energy bands $z:\,k\mapsto z(k)$, where $z(k)$ are eigenvalues of the corresponding fiber operators. Remarkably, for certain values of the coupling parameters these bands are linear. If $z=z(k)$ is  constant then $z$ is an eigenvalue of infinite multiplicity in the spectrum of $\hat{H}_{\eta,\tau,\lambda}$. If $z$ is linear but non-constant then there exist a sort of edge states that are localized close to $\Sigma$ and propagate in $y$-direction with constant speed $\dd z/\dd k$, see Section \ref{sec:LinBands}.

The singular potential term in \eqref{eq:formal} (with $\omega=0$) is realized by a transmission condition along the line $\Sigma$ in the definition of $\hat{H}_{\eta,\tau,\lambda}$. To understand the nature of the delta-shell interaction it is tempting to approximate $\delta_\Sigma$ by a family $(W_\varepsilon)_{\varepsilon>0}$ of scaled regular (and even bounded) potentials that converge to $\delta_\Sigma$ in the sense of distributions. This way, one gets a family of operators 
\begin{equation*} 
\hat{H}^\varepsilon_{\eta,\tau,\lambda}:=\hat H+(\sigma_0 \eta+\sigma_3\tau+\sigma_2\lambda)W_\varepsilon,
\end{equation*}
where $\hat H$ is the free two-dimensional Dirac operator. Surprisingly, $\hat{H}^\varepsilon_{\eta,\tau,\lambda}$ does not converge to $\hat{H}_{\eta,\tau,\lambda}$ as $\varepsilon\to 0$. Instead, we will show that the strong resolvent limit of $\hat{H}^\varepsilon_{\eta,\tau,\lambda}$ is $\hat{H}_{C\eta,C\tau,C\lambda}$, where  $C\in\R$ depends non-trivially on $\eta,\,\tau,$ and $\lambda$, see Proposition \ref{proposition_approximation}. The same renormalization of the coupling constants has been observed before in different settings \cite{CaLoMaTu_21,Hu_99,MaPi_17,MaPi_18,Se_89,Tu_20}. Note that if one starts in the one-dimensional situation with non-local potentials, e.g., projections on $W_\varepsilon$, instead of the local ones, then the renormalization of the coupling constants in the limit $\varepsilon\to 0$ does not occur \cite{Heriban,Se_89}.

The paper is organized as follows. In Section \ref{sec:free_op}, we  recall basic properties of the free two-dimensional Dirac operator and introduce the fiber operators \eqref{eq:fib_op} that appear in its direct integral decomposition. In Section \ref{sec:pointInt}, we add a general hermitian point interaction  to these fiber operators. Then we  show in Section \ref{sec:unitary_eq} that without loss of generality we may study just a three-parametric subfamily of point interactions. The spectra of the fiber operators with additional point interactions are analyzed in Section \ref{sec:spec_fibers}. 
The two-dimensional Dirac operator with a $\delta$-shell interaction supported on a straight line is introduced and its spectral properties are studied in Section \ref{sec:2dOp}.
Section \ref{sec:approx} is devoted to approximations of $\delta$-shell interactions by regular potentials. Finally, in Appendix~\ref{section_appendix} we briefly recall some properties 
of direct integrals of self-adjoint operators and we prove a criterion on the absence of singular continuous spectrum.

\section{Two-dimensional free Dirac ope\-rator} \label{sec:free_op}
In this section we recall the definition and some basic properties of the two-dimensional free Dirac operator $\hat H$. Let 
\begin{equation} \label{def_Pauli}
 \sigma_1=\begin{pmatrix} 0 & 1 \\ 1 & 0 \end{pmatrix},\quad 
 \sigma_2=\begin{pmatrix} 0 & -i \\ i & 0 \end{pmatrix},\quad\text{and}\quad
 \sigma_3=\begin{pmatrix} 1 & 0 \\ 0 & -1 \end{pmatrix}\quad 
\end{equation}
be the Pauli matrices, denote by $\sigma_0$ the $2\times 2$ identity matrix, and let $m\in\R$. Then 
\begin{equation}\label{hath}
\begin{split}
  \hat H&=\sigma_1(-i\partial_x)+\sigma_2(-i\partial_y)+\sigma_3 m\equiv-i\sigma\cdot\nabla+\sigma_3 m,\\
 \dom(\hat H)&=H^1(\R^2;\C^2),
\end{split}
\end{equation}
is self-adjoint in the Hilbert space $L^2(\R^2;\C^2)$; here and in the following $H^1$ denotes the Sobolev space of $\C^2$-valued $L^2$-functions with first order weak derivatives in $L^2$. After the partial Fourier--Plancherel  transform in the $y$-variable $\mathscr{F}_{y\to k}$ the operator $\hat H$ in \eqref{hath} decomposes into the direct integral,
$$\mathscr{F}_{y\to k}\hat H\mathscr{F}_{y\to k}^{-1}=\int_\R^\oplus H[k]\dd k,$$
where each fiber acts as
\begin{equation}\label{hkop}
\begin{split}
 H[k]&=\sigma_1(-i\partial_x)+\sigma_2 k+\sigma_3 m=\begin{pmatrix}
                                                    m & -i(\partial_x+k)\\
                                                    -i(\partial_x-k) & -m
                                                   \end{pmatrix},\\
                          \dom(H[k])&=H^1(\R;\C^2),                        
\end{split}
\end{equation}
and we identify $L^2(\R^2;\C^2)=\int_\R^\oplus L^2(\R;\C^2)\dd k$;
cf. Appendix~\ref{section_appendix}. With the help of the Fourier transform in the $x$-variable one shows that the fiber operators $H[k]$, $k\in\R$, are unitarily equivalent to the multiplication operator associated with the matrix-valued function
\begin{equation} \label{eq:mult_matrix}
\begin{pmatrix}
m & p-ik\\
p+ik & -m
\end{pmatrix}
\end{equation}
in $L^2(\R,\dd p;\C^2)$. Consequently, the operators $H[k]$ are
self-adjoint in $L^2(\R;\C^2)$ and, since the eigenvalues of the matrix \eqref{eq:mult_matrix} are $\pm \sqrt{m^2+k^2+p^2}$, we have 
\begin{equation}\label{spechk}
\sigma(H[k])=\sigma_{\textup{ac}}(H[k])=\bigl(-\infty,-\sqrt{m^2+k^2}\bigr]\cup\bigl[\sqrt{m^2+k^2},+\infty\bigr);
\end{equation}
see \cite{PaRi_14} and \cite[Theorem 1.1]{thaller} for similar considerations when $k=0$ and in the three-dimensional case, respectively.
A direct calculation shows that
\begin{equation*}
  (H[k])^2 = \sigma_0(-\partial_{xx} + m^2 + k^2), \quad \dom((H[k])^2)=H^2(\R;\C^2).
\end{equation*}
Hence, we get for $z \notin \sigma(H[k])$ and $f \in L^2(\mathbb{R}; \mathbb{C}^2)$
\begin{equation*}
\begin{split}
  (H[k] - z)^{-1} f &= (H[k] + z) \big(\sigma_0(-\partial_{xx} + m^2 + k^2 -z^2)\big)^{-1} f \\ 
  &= \int_\mathbb{R} G_z(\cdot-y) f(y) \textup{d} y
\end{split}
  \end{equation*}
with
\begin{equation} \label{def_G_z}
  G_z(x) = \frac{i}{2 \xi_k(z)} \ena{i \xi_k(z) |x|} \begin{pmatrix} z+m & \xi_k(z) \sgn(x) - i k\\ \xi_k(z) \sgn(x) + ik & z-m \end{pmatrix},
\end{equation}
and $\xi_k(z) := \sqrt{z^2-k^2-m^2}$, where the complex square root is chosen such that $\textup{Im} \sqrt{w} > 0$ for $w \in \mathbb{C} \setminus [0,+\infty)$.
Here we used the fact that the integral kernel of $(-\partial_{xx}-w)^{-1}$ is $$\frac{i}{2\sqrt{w}}\ena{i\sqrt{w}|x-y|}.$$

\section{Adding a point interaction to the fiber operators}  \label{sec:pointInt}
In this section we consider the formal first order differential expression
\begin{equation} \label{eq:formalFree}
D[k]=\sigma_1(-i\partial_x)+\sigma_2 k+\sigma_3 m
\end{equation}
and the perturbed differential expression
\begin{equation} \label{eq:formalOp}
D[k]_{\eta,\tau,\lambda,\omega}=D[k]+(\sigma_0 \eta +\sigma_3 \tau +\sigma_2 \lambda +\sigma_1 \omega)\delta,
\end{equation}
where $\eta,\tau,\lambda,\omega\in\R$, and $\delta$ is the Dirac $\delta$-distribution supported at $x=0$. Our aim is to associate 
a self-adjoint operator $H[k]_{\eta,\tau,\lambda,\omega}$ in $L^2(\R;\C^2)$ to \eqref{eq:formalOp} using appropriate transmission conditions at $x=0$ for the functions in the 
operator domain. Note that for $\eta=\tau=\lambda=\omega=0$ the expression $D[k]_{\eta,\tau,\lambda,\omega}$ reduces to $D[k]$ in \eqref{eq:formalFree}, where the corresponding 
self-adjoint operator $H[k]$ is given in \eqref{hkop}.

Consider
$\psi\equiv\psi_-\oplus\psi_+\in H^1(\R_-;\C^2)\oplus H^1(\R_+;\C^2)$ and denote the traces of $\psi_-$ and $\psi_+$ at $x=0$ by $\psi(0_-)$ and $\psi(0_+)$, respectively. 
Recall that the traces coincide with the boundary values of the continuous representatives of $\psi_-$ and $\psi_+$, respectively. 
Integration by parts shows that the distribution $D[k]\psi$ acts on $\varphi \in C_0^\infty(\mathbb{R};\C^2)$ as 
\begin{equation*}
 \begin{split}
  \big( &D[k]\psi, \varphi \big) = \int_\mathbb{R} \psi  (\sigma_1(i\partial_x)-\sigma_2 k+\sigma_3 m) \varphi\, \textup{d} x \\
  &=\int_{-\infty}^0 (\sigma_1(-i\partial_x)+\sigma_2 k+\sigma_3 m) \psi_ - \,\varphi\,\textup{d} x+ \int_0^\infty (\sigma_1(-i\partial_x)+\sigma_2 k+\sigma_3 m)\psi_+ \,\varphi\,\textup{d} x \\
  &\qquad \qquad -i \sigma_1(\psi(0_+)-\psi(0_-)) \varphi(0);
 \end{split}
\end{equation*}
here $(\cdot, \cdot)$ denotes the duality product in $(C_0^\infty(\mathbb{R};\C^2))' \times C_0^\infty(\mathbb{R};\C^2)$.
Therefore, in the sense of distributions we conclude
\begin{equation} \label{distr_repr}
D[k] \psi=D[k]\psi_-\oplus D[k]\psi_+ -i \sigma_1(\psi(0_+)-\psi(0_-))\delta.
\end{equation}
We define the product of the (not necessarily smooth) function $\psi$ with the $\delta$-distribution as 
\begin{equation}\label{dede}
(\psi\delta,\varphi):=\frac{\psi(0_+)+\psi(0_-)}{2}\,\varphi(0).
\end{equation}
Now we want $D[k]_{\eta,\tau,\lambda,\omega}\psi$ to be generated by an element in $L^2(\R;\C^2)$, and hence the singular contributions in 
\begin{equation*}
 D[k]_{\eta,\tau,\lambda,\omega}\psi=D[k]\psi +(\sigma_0 \eta +\sigma_3 \tau +\sigma_2 \lambda +\sigma_1 \omega)\psi\delta
\end{equation*}
and~\eqref{distr_repr} have to cancel out.
Using \eqref{dede} this yields
$$-i\sigma_1 (\psi(0_+)-\psi(0_-))+(\sigma_0 \eta +\sigma_3 \tau +\sigma_2 \lambda +\sigma_1 \omega)\frac{\psi(0_+)+\psi(0_-)}{2}=0,$$ 
and it  is convenient to rewrite this as
\begin{equation} \label{eq:TC}
(2i\sigma_1-M_{\eta,\tau,\lambda,\omega})\psi(0_+)=(2i\sigma_1+M_{\eta,\tau,\lambda,\omega})\psi(0_-)
\end{equation}
with 
\begin{equation*}
M_{\eta,\tau,\lambda,\omega}:=\sigma_0 \eta +\sigma_3 \tau +\sigma_2 \lambda +\sigma_1 \omega=
\begin{pmatrix} \eta+\tau & -i\lambda+\omega \\ i\lambda+\omega & \eta-\tau\end{pmatrix}.
\end{equation*}

The above considerations lead to the following definition of the perturbed operator $H[k]_{\eta,\tau,\lambda,\omega}$.

\begin{definition}\label{popo}
Let $\eta,\tau,\lambda,\omega\in\R$ and $D[k]$, $k\in\R$, be as in \eqref{eq:formalFree}. 
The operator $H[k]_{\eta,\tau,\lambda,\omega}$ associated with the perturbed differential expression $D[k]_{\eta,\tau,\lambda,\omega}$ in
$L^2(\R;\C^2)$ is defined by
\begin{align*}
 H[k]_{\eta,\tau,\lambda,\omega}\psi &=D[k]\psi_-\oplus D[k]\psi_+\\
 &=(\sigma_1(-i\partial_x)+\sigma_2 k+\sigma_3 m)\psi_-\oplus (\sigma_1(-i\partial_x)+\sigma_2 k+\sigma_3 m)\psi_+,\\
\dom(H[k]_{\eta,\tau,\lambda,\omega})&=\bigl\{\psi=\psi_-\oplus\psi_+\in H^1(\R_-;\C^2)\oplus H^1(\R_+;\C^2)|\, \eqref{eq:TC}\text{ holds}\bigr\}.
 \end{align*}
\end{definition}
 
In the following we shall examine the transmission conditions \eqref{eq:TC} of the functions in $\dom(H[k]_{\eta,\tau,\lambda,\omega})$. It is convenient to 
distinguish the two possible cases
$\det(2i\sigma_1-M_{\eta,\tau,\lambda,\omega})\neq 0$ and $\det(2i\sigma_1-M_{\eta,\tau,\lambda,\omega})=0$. In the first case  
we have 
$$\det(2i\sigma_1-M_{\eta,\tau,\lambda,\omega})=d-(2i-\omega)^2\neq 0,$$ where 
\begin{equation*} 
d:=\eta^2-\tau^2-\lambda^2,
\end{equation*}
and hence we can further rewrite \eqref{eq:TC} in the form 
\begin{equation} \label{eq:TClambda}
\psi(0_+)=\Lambda\psi(0_-) 
\end{equation}
with 
\begin{equation} \label{eq:LambdaDef}
\begin{split}
\Lambda&=(2i\sigma_1-M_{\eta,\tau,\lambda,\omega})^{-1}(2i\sigma_1+M_{\eta,\tau,\lambda,\omega})\\
&=\frac{1}{d-(2i-\omega)^2}\begin{pmatrix}
4+4\lambda+\omega^2-d &4i(\tau-\eta)\\
-4i(\tau+\eta)& 4-4\lambda+\omega^2-d
\end{pmatrix}.
\end{split}
\end{equation}
Note that 
$$
\det{\Lambda}=\frac{1}{d-(2i-\omega)^2}(d-(2i+\omega)^2)
$$
and hence $|\det{\Lambda}|=1$. Moreover, $\Lambda$ is a multiple of a matrix with real diagonal terms and purely imaginary off-diagonal terms. Therefore, the domain of $H[k]_{\eta,\tau,\lambda,\omega}$ is identical to the domain of one of the self-adjoint extensions of  $H[0]$ restricted to the functions that vanish at $x=0$, that were studied in \cite{BeDa_94}  (note that the transmission condition equivalent to \eqref{eq:TClambda} appeared before in \cite{DiExSe_89} and some special cases were investigated even earlier in \cite{GeSe_87}).  The action of these extensions is the same as the action of $H[k]_{\eta,\tau,\lambda,\omega}$ up to the term $(-\sigma_2 k)$, which is bounded and symmetric. Using the Kato-Rellich theorem, we conclude that $H[k]_{\eta,\tau,\lambda,\omega}$ is also self-adjoint. 

\begin{remark}
Although for every hermitian matrix $M_{\eta,\tau,\lambda,\omega}$ such that $\det(2i\sigma_1-M_{\eta,\tau,\lambda,\omega})\neq 0$, $\Lambda$ given by \eqref{eq:LambdaDef} is always the matrix describing the transmission condition of a certain self-adjoint extension studied in \cite{BeDa_94}, the converse is not true. More precisely, there are admissible $\Lambda$'s (e.g., $\Lambda=-\sigma_0$) that are not generated by any $M_{\eta,\tau,\lambda,\omega}$.
\end{remark}

Let us now consider the second case $\det(2i\sigma_1-M_{\eta,\tau,\lambda,\omega})=0$. This happens if and only if $\omega=0$ and $d=-4$. Multiplying \eqref{eq:TC}  by $\sigma_1$ we get
$$(2i\sigma_0-\sigma_1 M_{\eta,\tau,\lambda,0})\psi(0_+)=(2i\sigma_0+\sigma_1 M_{\eta,\tau,\lambda,0})\psi(0_-).$$
Multiplying by $(2i\sigma_0\pm\sigma_1 M_{\eta,\tau,\lambda,0})$ on both sides of this identity and employing the fact that 
\begin{equation} \label{eq:Msquared}
(\sigma_1 M_{\eta,\tau,\lambda,0})^2=d=-4,
\end{equation}
we obtain
$$(-8\pm 4i\sigma_1 M_{\eta,\tau,\lambda,0})\psi(0_\mp)=0,$$
which is equivalent to
\begin{equation} \label{eq:TCdecoupled}
(2i\sigma_1\pm M_{\eta,\tau,\lambda,0})\psi(0_\mp)=0.
\end{equation}
On the other hand, functions obeying \eqref{eq:TCdecoupled} clearly satisfy \eqref{eq:TC}. Hence, in the present case we may rewrite the domain of the operator 
$H[k]_{\eta,\tau,\lambda,0}$ in Definition~\ref{popo} in the form
\begin{equation*}
\dom(H[k]_{\eta,\tau,\lambda,0})=\bigl\{\psi\equiv\psi_-\oplus\psi_+\in H^1(\R_-;\C^2)\oplus H^1(\R_+;\C^2)|\, \eqref{eq:TCdecoupled}\text{ holds}\bigr\}.
\end{equation*}
The operator $H[k]_{\eta,\tau,\lambda,0}$ (with $d=-4$) decouples into a direct sum of operators acting separately on $L^2(\R_-;\C^2)$ and $L^2(\R_+;\C^2)$. Let us look closer at the boundary conditions for $\psi_\mp$. Note that the rank of the matrix in \eqref{eq:TCdecoupled} is one and, except for the case $\tau=\eta$ and $\lambda=\mp 2$, the condition \eqref{eq:TCdecoupled} for $\psi_\mp(0)$ is equivalent to
\begin{equation*}
(\tau-\eta)\psi_\mp^2(0)=i(\lambda\pm 2)\psi_\mp^1(0),
\end{equation*}
where the upper index denotes the components of the $\mathbb{C}^2$-valued function $\psi$. If $\tau=\eta$ together with $\lambda=\mp2$ we may rewrite \eqref{eq:TCdecoupled} for $\psi_\mp$ as $\psi_-^2(0)=\frac{i\eta}{2}\psi_-^1(0)$ or $\psi_+^2(0)=-\frac{i\eta}{2}\psi_+^1(0)$, respectively.
Therefore, in the case $\omega=0$ and $d=-4$ the boundary condition for each of the subsystems is of the form $\psi_\pm^2(0)=i\zeta_\pm\psi_\pm^1(0)$, where $\zeta_\pm\in \R\cup\{+\infty\}$ (the choice $\zeta_\pm=+\infty$ is understood as $\psi_\pm^1(0)=0$). It has been already established that the corresponding decoupled operators are self-adjoint on $L^2(\R_\pm;\C^2)$ and their spectrum as a function of $k$ was studied in detail \cite{GrLe_06}. This special case will not be further analyzed in the current manuscript.

For the convenience of the reader we summarize the findings of this section in the following theorem:

\begin{theorem}
For $\eta,\tau,\lambda,\omega\in \R$ and all $k\in\R$ the operator $H[k]_{\eta,\tau,\lambda,\omega}$ 
in Definition~\ref{popo}  is self-adjoint in $L^2(\R;\C^2)$. Furthermore, the following holds:
\begin{itemize}
\item [{\rm (i)}] If $\omega\neq 0$ or $d=\eta^2-\tau^2-\lambda^2\neq -4$, then 
\begin{equation*}
\dom(H[k]_{\eta,\tau,\lambda,\omega})=\bigl\{\psi=\psi_-\oplus\psi_+\in H^1(\R_-;\C^2)\oplus H^1(\R_+;\C^2)|\, \psi(0_+)=\Lambda\psi(0_-) \bigr\},
\end{equation*}
where $\Lambda$ is the $2\times 2$-matrix in \eqref{eq:LambdaDef}.
\item [{\rm (ii)}]  If $\omega=0$ and $d=-4$, then  $H[k]_{\eta,\tau,\lambda,0}$ decouples in the orthogonal sum of two self-adjoint operators
$H[k]_{\eta,\tau,\lambda,0}^\pm$ in $L^2(\R^\pm;\C^2)$, that is, 
$$H[k]_{\eta,\tau,\lambda,0}=H[k]_{\eta,\tau,\lambda,0}^-\oplus H[k]_{\eta,\tau,\lambda,0}^+,$$
where 
$$H[k]_{\eta,\tau,\lambda,0}^\pm\varphi_\pm=(\sigma_1(-i\partial_x)+\sigma_2 k+\sigma_3 m)\varphi_\pm$$
and 
$$\dom(H[k]_{\eta,\tau,\lambda,0}^\pm)=\bigl\{\varphi = (\varphi^1, \varphi^2)  \in H^1(\R_\pm;\C^2)|\,
(\tau-\eta)\varphi^2(0)=i(\lambda\mp 2)\varphi^1(0)\bigr\}$$
if $\tau\neq\eta$ or $\lambda\neq\pm 2$, and 
\begin{align*}
&\dom(H[k]_{\eta,\eta,-2,0}^-)=\bigl\{\varphi \in H^1(\R_-;\C^2)|\, \varphi^2(0)=\tfrac{i\eta}{2} \varphi^1(0)\bigr\},\\
&\dom(H[k]_{\eta,\eta,2,0}^+)=\bigl\{\varphi \in H^1(\R_+;\C^2)|\, \varphi^2(0)=-\tfrac{i\eta}{2}\varphi^1(0)\bigr\}.
\end{align*}
\end{itemize}
\end{theorem}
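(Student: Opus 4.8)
The plan is to establish the two assertions of the theorem by simply collecting and reorganizing the computations already carried out in this section, so that the ``proof'' is essentially a verification that all the claimed cases have been covered.

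First I would prove self-adjointness. Here one distinguishes the two cases according to whether $\det(2i\sigma_1 - M_{\eta,\tau,\lambda,\omega}) \neq 0$ or $=0$. In the first case, I would invoke the discussion following \eqref{eq:LambdaDef}: since $|\det\Lambda| = 1$ and $\Lambda$ has real diagonal and purely imaginary off-diagonal entries, the transmission condition \eqref{eq:TClambda} defines the domain of one of the self-adjoint extensions of $H[0]$ (restricted to functions vanishing at $x=0$) classified in \cite{BeDa_94}; since $H[k]_{\eta,\tau,\lambda,\omega}$ differs from the corresponding operator (with $k=0$) only by the bounded symmetric term $\sigma_2 k$, the Kato--Rellich theorem yields self-adjointness of $H[k]_{\eta,\tau,\lambda,\omega}$. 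In the second case $\omega = 0$, $d = -4$, I would use the equivalence of \eqref{eq:TC} and \eqref{eq:TCdecoupled} established above, reducing the operator to the orthogonal sum $H[k]_{\eta,\tau,\lambda,0}^- \oplus H[k]_{\eta,\tau,\lambda,0}^+$ with boundary conditions $\psi_\pm^2(0) = i\zeta_\pm \psi_\pm^1(0)$, $\zeta_\pm \in \R \cup \{+\infty\}$, and again cite the known self-adjointness of these one-sided Dirac operators (from \cite{GrLe_06}, or by the same Kato--Rellich argument applied to the half-line operator at $k=0$).

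Next, assertion (i): the hypothesis $\omega \neq 0$ or $d \neq -4$ is, by the observation made earlier, exactly the condition $\det(2i\sigma_1 - M_{\eta,\tau,\lambda,\omega}) \neq 0$, so the rewriting of \eqref{eq:TC} as $\psi(0_+) = \Lambda \psi(0_-)$ with $\Lambda$ given by \eqref{eq:LambdaDef} applies verbatim, and the domain description follows from Definition~\ref{popo}. For assertion (ii), with $\omega = 0$ and $d = -4$, I would quote the decoupling established above and then spell out the boundary condition coming from \eqref{eq:TCdecoupled}. The matrix $2i\sigma_1 \pm M_{\eta,\tau,\lambda,0}$ has rank one; writing it out componentwise gives $(\tau - \eta)\psi_\mp^2(0) = i(\lambda \pm 2)\psi_\mp^1(0)$ whenever $(\tau,\lambda) \neq (\eta, \mp 2)$, which is precisely the stated condition after matching signs; and in the exceptional case $\tau = \eta$, $\lambda = \mp 2$ the surviving row of the matrix gives $\psi_\mp^2(0) = \mp \tfrac{i\eta}{2}\psi_\mp^1(0)$, which is the last pair of displayed domain formulas.

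Since every ingredient is already in place, there is no serious obstacle; the only point requiring a little care is the bookkeeping of signs and of the degenerate sub-cases in \eqref{eq:TCdecoupled} --- in particular checking that the rank-one reduction is carried out consistently for the two choices of sign and that the excluded locus $(\tau,\lambda) = (\eta,\mp 2)$ is handled separately --- together with making explicit (rather than merely asserting) that ``$\omega \neq 0$ or $d \neq -4$'' coincides with $\det(2i\sigma_1 - M_{\eta,\tau,\lambda,\omega}) \neq 0$, which is immediate from $\det(2i\sigma_1 - M_{\eta,\tau,\lambda,\omega}) = d - (2i-\omega)^2$.
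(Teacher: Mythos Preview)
Your proposal is correct and follows exactly the paper's own route: the theorem is explicitly presented in the paper as a summary of the preceding discussion, and you have correctly identified each ingredient (the determinant dichotomy, the \cite{BeDa_94} classification combined with Kato--Rellich for case~(i), and the rank-one reduction of \eqref{eq:TCdecoupled} together with the reference to \cite{GrLe_06} for case~(ii)). The only caveat is the sign bookkeeping you yourself flag: in the exceptional sub-case $\tau=\eta$, $\lambda=\mp 2$ your formula $\psi_\mp^2(0)=\mp\tfrac{i\eta}{2}\psi_\mp^1(0)$ has the sign reversed relative to the correct $\psi_-^2(0)=\tfrac{i\eta}{2}\psi_-^1(0)$, $\psi_+^2(0)=-\tfrac{i\eta}{2}\psi_+^1(0)$, so do redo that row computation carefully.
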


We conclude this section by providing an explicit formula for the resolvent of $H[k]_{\eta,\tau,\lambda,\omega}$.
Let $z \in \mathbb{C} \setminus \mathbb{R}$. Recall that \emph{the Green function} $G_z$ is defined by~\eqref{def_G_z} and introduce the matrix
\begin{equation*}
  C_z := \frac{i}{2 \sqrt{z^2-k^2-m^2}} \begin{pmatrix} z+m & -ik \\ ik & z-m \end{pmatrix}
  =\frac{i}{2 \xi_k(z)} \begin{pmatrix} z+m & -ik \\ ik & z-m \end{pmatrix}.
\end{equation*}
Then one can check by a direct calculation that the matrix $\sigma_0 + M_{\eta,\tau,\lambda,\omega} C_z$ is invertible and that 
\begin{equation} \label{krein_formula}
 \begin{split}
 & (H[k]_{\eta,\tau,\lambda,\omega} - z)^{-1} f =\\
 &\qquad (H[k]-z)^{-1}f - G_z(\cdot) \big( \sigma_0 + M_{\eta,\tau,\lambda,\omega} C_z \big)^{-1} M_{\eta,\tau,\lambda,\omega} \int_\mathbb{R} G_z(-y) f(y) \textup{d} y
 \end{split}
\end{equation}
holds for all $f \in L^2(\mathbb{R}; \mathbb{C}^2)$. Indeed, if one defines the right hand side of this equation as $g$, then $g \in H^1(\mathbb{R}_-; \mathbb{C}^2) \oplus H^1(\mathbb{R}_+; \mathbb{C}^2)$ and this function satisfies the transmission conditions in~\eqref{eq:TC}, i.e., $g \in \dom(H[k]_{\eta,\tau,\lambda,\omega})$. Moreover, as $G_z$ is the fundamental solution for $\sigma_1 (-i \partial_x )+ \sigma_2 k+ \sigma_3 m-  \sigma_0 z$, we get $(H[k]_{\eta,\tau,\lambda,\omega} - z) g = f$. Putting this together with the fact that $(H[k]_{\eta,\tau,\lambda,\omega}-z)$ is injective, we infer that the resolvent formula 
\eqref{krein_formula} is true. For similar formulae as~\eqref{krein_formula} in the case $k=0$ we refer to \cite[eq.~(2.4)]{PaRi_14} and \cite[eq.~(19)]{BeDa_94}; this kind of ansatz 
and expression for the resolvent of $H[k]_{\eta,\tau,\lambda,\omega}$ is often referred to as a so-called Krein formula; cf. \cite[\S 106]{AkGl}.

\section{Unitary equivalences} \label{sec:unitary_eq}
In the same vein as in \cite{CaLoMaTu_21} the interaction term $\sigma_1 \omega \delta$ may be "gauged away". More concretely, we will construct a unitary transform in $L^2(\R;\C^2)$ such that $H[k]_{\eta,\tau,\lambda,\omega}$ is unitarily equivalent to $H[k]_ {X\eta,X\tau,X\lambda,0}$ with some $X\in\R$. Let $\chi_M$ stand for the indicator function of a set $M$. Since $H[k]_{\eta,\tau,\lambda,\omega}$ and $H[k]_ {X\eta,X\tau,X\lambda,0}$ differ only in the transmission conditions at $x=0$, it is quite natural to start with the following ansatz for the unitary transform:
\begin{equation} \label{eq:Udef}
U_z\varphi:=\chi_{\R_+}\varphi+z\chi_{\R_-}\varphi \quad\text{with}\quad z\in\C,\, |z|=1,\qquad \varphi\in L^2(\R;\C^2).
\end{equation}
We would like to find $z\in\C$ with $|z|=1$ and $X\in\R$ such that
\begin{equation} \label{eq:unitEq}
H[k]_{\eta,\tau,\lambda,\omega}=U_z H[k]_ {X\eta,X\tau,X\lambda,0} U_{\bar z}.
\end{equation}
Note that \eqref{eq:unitEq} is valid if and only if the following equivalence holds:
\begin{equation*}
 \psi \in \dom(H[k]_{\eta,\tau,\lambda,\omega}) \quad\text{if and only if}\quad U_{\bar z}\psi\in\dom(H[k]_ {X\eta,X\tau,X\lambda,0}),
\end{equation*}
that is, for $\psi=\psi_-\oplus\psi_+\in H^1(\R_-;\C^2)\oplus H^1(\R_+;\C^2)$ the condition 
\begin{equation} \label{eq:TCfull}
(2i\sigma_0-\sigma_1 M_{\eta,\tau,\lambda,\omega})\psi(0_+)=(2i\sigma_0+\sigma_1 M_{\eta,\tau,\lambda,\omega})\psi(0_-)
\end{equation}
holds if and only if the condition
\begin{equation} \label{eq:TCred}
(2i\sigma_0-\sigma_1 M_{X\eta,X\tau,X\lambda,0})\psi(0_+)=(2i\sigma_0+\sigma_1 M_{X\eta,X\tau,X\lambda,0})\bar z\psi(0_-)
\end{equation}
is true.
We may assume that $\omega\neq 0$; otherwise, the problem has an obvious solution. Then the matrices on both sides of \eqref{eq:TCfull} are invertible, because
\begin{equation*}
\det(2i\sigma_0\pm\sigma_1 M_{\eta,\tau,\lambda,\omega})=\omega^2-4-d\pm 4i\omega\neq 0.
\end{equation*}
Furthermore, we will assume that $X$ is such that
\begin{equation} \label{eq:Xassump}
\det(2i\sigma_0\pm \sigma_1 M_{X\eta,X\tau,X\lambda,0})=-4-dX^2\neq 0,
\end{equation}
and so we can also invert the matrices on both sides of \eqref{eq:TCred}. Using $XM_{\eta,\tau,\lambda,0}=M_{X\eta,X\tau,X\lambda,0}$, we get that \eqref{eq:unitEq} is equivalent to
\begin{multline} \label{eq:eqCond}
(2i\sigma_0+\sigma_1 M_{\eta,\tau,\lambda,\omega})(2i\sigma_0+X\sigma_1 M_{\eta,\tau,\lambda,0})^{-1}\\
=\bar{z}(2i\sigma_0-\sigma_1 M_{\eta,\tau,\lambda,\omega})(2i\sigma_0-X\sigma_1 M_{\eta,\tau,\lambda,0})^{-1}.
\end{multline}
With the help of \eqref{eq:Msquared}, we infer that
$$(2i\sigma_0\pm X\sigma_1 M_{\eta,\tau,\lambda,0})(2i\sigma_0\mp X\sigma_1 M_{\eta,\tau,\lambda,0})=-4-dX^2.$$
Hence, we get
$$(2i\sigma_0\pm X\sigma_1 M_{\eta,\tau,\lambda,0})^{-1}=\frac{1}{-4-dX^2}(2i\sigma_0\mp X\sigma_1 M_{\eta,\tau,\lambda,0}).$$
Substituting this into \eqref{eq:eqCond}, using \eqref{eq:Msquared} and $M_{\eta,\tau\lambda,\omega}=M_{\eta,\tau,\lambda,0}+\omega\sigma_1$ we arrive at
\begin{multline} \label{eq:compatibility}
(-4-dX+2i\omega)\sigma_0+(-\omega X+2(1-X)i)\sigma_1 M_{\eta,\tau,\lambda,0}\\
=\bar{z}\big((-4-dX-2i\omega)\sigma_0+(-\omega X-2(1-X)i)\sigma_1 M_{\eta,\tau,\lambda,0}\big).
\end{multline}
Since $\sigma_1 M_{\eta,\tau,\lambda,0}$ is a linear combination of $\{\sigma_i\}_{i=1}^{3}$ and the system $\{\sigma_i\}_{i=0}^{3}$ is linearly independent, this may happen if and only if
\begin{equation} \label{equation_coeff}
\begin{split}
4+dX-2i\omega&=\bar{z}(4+dX+2i\omega)\\
\omega X-2(1-X)i&=\bar{z}(\omega X+2(1-X)i).
\end{split}
\end{equation}
This yields
\begin{equation} \label{eq:zDef}
z=\frac{4+dX+2i\omega}{4+dX-2i\omega} \quad\wedge\quad z=\frac{\omega X+2(1-X)i}{\omega X-2(1-X)i}.
\end{equation}
Note that the denominators in \eqref{eq:zDef} are always non-zero, because $\omega\neq 0$, and $|z|=1$, as required.

It remains to prove that $X\in\R$ exists such that the equalities in \eqref{eq:zDef} hold simultaneously. By comparing the two expressions for $z$ in~\eqref{eq:zDef} one concludes that this is equivalent to finding a real solution of
\begin{equation} \label{eq:quadratic}
dX^2+(4-d+\omega^2)X-4=0,
\end{equation}
see also \cite[Theorem 2.1]{CaLoMaTu_21}. If $d\neq 0$ then  we get
\begin{equation} \label{eq:quadraticSol}
X=\frac{1}{2d}\Big(d-4-\omega^2\pm\sqrt{(d-4-\omega^2)^2+16d}\,\Big)
\end{equation}
with
$$(d-4-\omega^2)^2+16d=(d+4-\omega^2)^2+16\omega^2>0,$$
and so there are always two real solutions of \eqref{eq:quadratic}.
If $d=0$ then \eqref{eq:quadratic} reduces to $(4+\omega^2)X-4=0$ with the real solution
\begin{equation} \label{eq:linSol}
X=\frac{4}{4+\omega^2}.
\end{equation}
Finally, if we started our considerations with $X$ given by either \eqref{eq:quadraticSol} or \eqref{eq:linSol}  then  \eqref{eq:Xassump} would be always fulfilled, because otherwise  $d<0$ and $X=\pm 2/\sqrt{-d}$ which would not be compatible with \eqref{eq:quadraticSol}, as we assume $\omega \neq 0$.

We summarize our findings in the following theorem:
\begin{theorem}
Let $\omega\neq 0$, $z$ be defined by either of the equations in \eqref{eq:zDef} with $X$ given by \eqref{eq:quadraticSol} or \eqref{eq:linSol} for $d\neq 0$ or $d=0$, respectively. Then
$$H[k]_{\eta,\tau,\lambda,\omega}=U_z H[k]_ {X\eta,X\tau,X\lambda,0} U_{\bar z},$$
where the unitary mapping $U_z$ is described in \eqref{eq:Udef}.
\end{theorem}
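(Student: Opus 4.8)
The plan is to verify the operator identity \eqref{eq:unitEq} straight from Definition~\ref{popo}, separating it into an \emph{action} part and a \emph{domain} part. Since $|z|=1$, the map $U_z$ of \eqref{eq:Udef} is unitary on $L^2(\R;\C^2)$ with $U_z^{-1}=U_{\bar z}$, so \eqref{eq:unitEq} is equivalent to showing (a) that $U_z$ intertwines the actions of the two operators on $H^1(\R_-;\C^2)\oplus H^1(\R_+;\C^2)$, and (b) that $\psi\in\dom(H[k]_{\eta,\tau,\lambda,\omega})$ holds if and only if $U_{\bar z}\psi\in\dom(H[k]_{X\eta,X\tau,X\lambda,0})$. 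One preliminary remark is in order: for $X$ given by \eqref{eq:quadraticSol} or \eqref{eq:linSol} the inequality \eqref{eq:Xassump} is in force (as checked just above the statement), so $H[k]_{X\eta,X\tau,X\lambda,0}$ is of the non-decoupled type and its domain is described by the transmission condition \eqref{eq:TC} with the parameters $(X\eta,X\tau,X\lambda,0)$.

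The action part is immediate. The map $U_z$ multiplies by the constant $z$ on $\R_-$ and by $1$ on $\R_+$, hence it preserves $H^1(\R_-;\C^2)\oplus H^1(\R_+;\C^2)$ and commutes on each half-line with the differential expression $D[k]$; since $z\bar z=|z|^2=1$, the composition $U_z H[k]_{X\eta,X\tau,X\lambda,0}U_{\bar z}$ acts on such a $\psi=\psi_-\oplus\psi_+$ exactly as $D[k]\psi_-\oplus D[k]\psi_+$, that is, as $H[k]_{\eta,\tau,\lambda,\omega}$.

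For the domain part I would compute the traces $(U_{\bar z}\psi)(0_+)=\psi(0_+)$ and $(U_{\bar z}\psi)(0_-)=\bar z\,\psi(0_-)$, insert them into the transmission condition for the parameters $(X\eta,X\tau,X\lambda,0)$, and multiply by $\sigma_1$ to recast ``$U_{\bar z}\psi\in\dom(H[k]_{X\eta,X\tau,X\lambda,0})$'' as \eqref{eq:TCred}; in the same way ``$\psi\in\dom(H[k]_{\eta,\tau,\lambda,\omega})$'' becomes \eqref{eq:TCfull}. Because $\omega\neq0$ and \eqref{eq:Xassump} holds, all four matrices $2i\sigma_0\pm\sigma_1 M_{\eta,\tau,\lambda,\omega}$ and $2i\sigma_0\pm\sigma_1 M_{X\eta,X\tau,X\lambda,0}$ are invertible, so each of \eqref{eq:TCfull} and \eqref{eq:TCred} rewrites as $\psi(0_+)=A\psi(0_-)$ with an explicit $2\times 2$ matrix $A$, and these two conditions are equivalent for every $\psi(0_-)\in\C^2$ precisely when the two matrices coincide --- which is exactly the operator identity \eqref{eq:eqCond}.

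It then remains to check that \eqref{eq:eqCond} is indeed satisfied by the $X$ and $z$ singled out in the statement, and here I would simply retrace the computation already carried out before the theorem: using \eqref{eq:Msquared} to invert $2i\sigma_0\pm X\sigma_1 M_{\eta,\tau,\lambda,0}$ explicitly and the splitting $M_{\eta,\tau,\lambda,\omega}=M_{\eta,\tau,\lambda,0}+\omega\sigma_1$, the identity \eqref{eq:eqCond} reduces to \eqref{eq:compatibility}; comparing coefficients relative to the linearly independent system $\{\sigma_i\}_{i=0}^{3}$ (with $\sigma_1 M_{\eta,\tau,\lambda,0}$ a combination of $\sigma_1,\sigma_2,\sigma_3$) yields \eqref{equation_coeff}, hence \eqref{eq:zDef}, and the simultaneous solvability of the two expressions in \eqref{eq:zDef} is equivalent to the real solvability of the quadratic \eqref{eq:quadratic} --- which holds via \eqref{eq:quadraticSol} (the discriminant equals $(d+4-\omega^2)^2+16\omega^2>0$) when $d\neq0$ and via \eqref{eq:linSol} when $d=0$. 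I do not expect a real obstacle in this argument: the one place that requires care, and that the preceding discussion was deliberate about, is the bookkeeping of the invertibility hypotheses, above all verifying \eqref{eq:Xassump} for the selected $X$, since it is this that rules out the decoupled regime for $H[k]_{X\eta,X\tau,X\lambda,0}$ and legitimises describing its domain by the transmission condition \eqref{eq:TC}.
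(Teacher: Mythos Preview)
Your proposal is correct and follows essentially the same route as the paper: the discussion preceding the theorem in Section~\ref{sec:unitary_eq} \emph{is} the proof, and you have simply reorganized that derivation into an action/domain split, retracing the same chain \eqref{eq:TCfull}--\eqref{eq:TCred}$\,\Leftrightarrow\,$\eqref{eq:eqCond}$\,\Leftrightarrow\,$\eqref{eq:compatibility}$\,\Leftrightarrow\,$\eqref{equation_coeff}$\,\Leftrightarrow\,$\eqref{eq:zDef}, together with the solvability of \eqref{eq:quadratic} and the verification of \eqref{eq:Xassump}. The only cosmetic addition is your explicit mention that $U_z$ commutes with $D[k]$ on each half-line, which the paper leaves implicit.
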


Consequently, in what follows, we will restrict ourselves to the case $\omega=0$. Moreover, to lighten the notation, we will write 
\begin{equation*}
H[k]_{\eta,\tau,\lambda}:= H[k]_{\eta,\tau,\lambda,0}.
\end{equation*}

If $\omega=0$, $X=-4/d$ with $d\notin\{-4,0\}$, and $z=-1$ then \eqref{equation_coeff} and hence \eqref{eq:compatibility} are clearly valid and still equivalent to \eqref{eq:unitEq}. Therefore,
we get
\begin{proposition} \label{prop:UnitTrafo}
Let $d\notin \{-4,0\}$ and $U_{-1}$ be the unitary mapping given by \eqref{eq:Udef}. Then
\begin{equation*}
H[k]_{\eta,\tau,\lambda}=U_{-1}H[k]_{-\frac{4}{d}\eta,-\frac{4}{d}\tau,-\frac{4}{d}\lambda}U_{-1}.
\end{equation*}
\end{proposition}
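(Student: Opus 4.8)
The plan is to verify that the algebraic identity \eqref{eq:compatibility} holds with the stated choice of parameters, since everything in Section~\ref{sec:unitary_eq} up to \eqref{eq:compatibility} shows that this identity is equivalent to the operator identity \eqref{eq:unitEq} (once one knows the relevant matrices are invertible). Concretely, I would take $\omega=0$, $X=-4/d$, and $z=-1$, and substitute these values directly into the two scalar equations in \eqref{equation_coeff}, namely $4+dX-2i\omega=\bar z(4+dX+2i\omega)$ and $\omega X-2(1-X)i=\bar z(\omega X+2(1-X)i)$. With $\omega=0$ the first equation becomes $4+dX=\bar z(4+dX)$, and with $X=-4/d$ we have $4+dX=4-4=0$, so the first equation reads $0=\bar z\cdot 0$, which is trivially true. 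With $\omega=0$ the second equation becomes $-2(1-X)i=\bar z\cdot 2(1-X)i$, i.e.\ $-(1-X)=\bar z(1-X)$; since $X=-4/d\neq 1$ (as $d\notin\{-4,0\}$ forces $d\neq -4$, hence $-4/d\neq 1$), we may cancel $1-X$ and obtain $\bar z=-1$, which holds for $z=-1$. Thus \eqref{equation_coeff}, and therefore \eqref{eq:compatibility}, is satisfied.

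Next I would address the invertibility hypotheses that were used to pass between \eqref{eq:unitEq} and \eqref{eq:compatibility}. The derivation in the excerpt assumed $\omega\neq 0$ in two places: to guarantee that the matrices $2i\sigma_0\pm\sigma_1 M_{\eta,\tau,\lambda,\omega}$ in \eqref{eq:TCfull} are invertible, and implicitly in the assumption \eqref{eq:Xassump}. Here $\omega=0$, so I would instead note directly that $\det(2i\sigma_0\pm\sigma_1 M_{\eta,\tau,\lambda,0})=-4-d\neq 0$ precisely because $d\neq -4$, so the matrices on both sides of \eqref{eq:TCfull} (with $\omega=0$) are still invertible; and \eqref{eq:Xassump} with $\omega=0$ reads $-4-dX^2=-4-d\cdot 16/d^2=-4-16/d\neq 0$, which is equivalent to $d\neq -4$, again satisfied. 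Hence the entire chain of equivalences leading from \eqref{eq:unitEq} to \eqref{eq:compatibility} is valid verbatim in the present situation, and the verified identity \eqref{eq:compatibility} gives \eqref{eq:unitEq}, i.e.\ $H[k]_{\eta,\tau,\lambda}=U_{-1}H[k]_{X\eta,X\tau,X\lambda}U_{\bar z}$ with $X=-4/d$ and $\bar z=z=-1$. Writing out $X\eta=-\tfrac{4}{d}\eta$, etc., and noting $U_{\bar z}=U_{-1}$ yields exactly the claimed formula.

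I do not anticipate a genuine obstacle: the statement is essentially a matter of plugging in specific numbers into identities already established in the section, the only subtle point being to re-examine the two spots where $\omega\neq 0$ was invoked and confirm that the weaker hypothesis $d\notin\{-4,0\}$ suffices. The condition $d\neq 0$ is needed so that $X=-4/d$ makes sense, and $d\neq -4$ is needed both for the invertibility of the transmission matrices and to ensure $X\neq 1$ so that the cancellation in the second equation of \eqref{equation_coeff} is legitimate. If one wanted a completely self-contained argument one could alternatively bypass \eqref{eq:compatibility} and check directly that, with $\psi=\psi_-\oplus\psi_+$ and $U_{-1}\psi=\psi_+\oplus(-\psi_-)$, the transmission condition \eqref{eq:TC} for $H[k]_{\eta,\tau,\lambda}$ is equivalent to the transmission condition for $H[k]_{-4\eta/d,-4\tau/d,-4\lambda/d}$ applied to $U_{-1}\psi$; this is a short $2\times 2$ matrix computation using $M_{X\eta,X\tau,X\lambda,0}=XM_{\eta,\tau,\lambda,0}$ and $(\sigma_1 M_{\eta,\tau,\lambda,0})^2=d$, but invoking the work already done in the section is cleaner.
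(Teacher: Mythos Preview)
Your proposal is correct and follows essentially the same approach as the paper: the paper simply states, immediately before Proposition~\ref{prop:UnitTrafo}, that with $\omega=0$, $X=-4/d$, and $z=-1$ the equations in \eqref{equation_coeff} (hence \eqref{eq:compatibility}) are ``clearly valid and still equivalent to \eqref{eq:unitEq}'', whereas you spell out the two substitutions and explicitly verify the invertibility conditions that replace the paper's earlier assumption $\omega\neq 0$. Your observation that $d\neq -4$ is exactly what guarantees both $\det(2i\sigma_0\pm\sigma_1 M_{\eta,\tau,\lambda,0})=-4-d\neq 0$ and \eqref{eq:Xassump}, and that $d\neq 0$ makes $X=-4/d$ well-defined (and, implicitly, ensures $\sigma_1 M_{\eta,\tau,\lambda,0}\neq 0$ so the passage from \eqref{eq:compatibility} to \eqref{equation_coeff} remains an equivalence), fills in precisely the details the paper leaves to the reader.
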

\noindent This has been already observed in the two-dimensional setting for compact curves in \cite{CaLoMaTu_21} and, in special cases in \cite[Proposition~4.8~(i)]{BHOP20} and in dimension three in \cite[Theorem~2.3~(d)]{HOP17} and \cite[Proposition~4.2~(i)]{BEHL19_1}.

\section{Spectra of fiber operators} \label{sec:spec_fibers}

In this section we investigate the spectrum of the self-adjoint fiber operators $H[k]_{\eta,\tau,\lambda}= H[k]_{\eta,\tau,\lambda,0}$; cf. \cite{BeDa_94, GeSe_87, GrLe_06, PaRi_14} for related considerations.  From now on we shall exclude the confinement case 
and always assume that $d\neq -4$.
Since, by~\eqref{krein_formula}, the difference of the resolvents of $H[k]_{\eta,\tau,\lambda}$ and $H[k]$ is at most a rank two operator, we get
$$\sigma_{\textup{ess}}(H[k]_{\eta,\tau,\lambda})=\sigma_{\textup{ess}}(H[k])=\big(-\infty,-\sqrt{m^2+k^2}\big]\cup\big[\sqrt{m^2+k^2},+\infty\big)$$ 
due to Weyl's essential spectrum theorem. In fact mimicking the proof of \cite[Proposition 2.3]{PaRi_14} and using~\eqref{krein_formula} one can even show that
\begin{eqnarray*}
&\sigma_{\textup{ac}}(H[k]_{\eta,\tau,\lambda})=\big(-\infty,-\sqrt{m^2+k^2}\big]\cup\big[\sqrt{m^2+k^2},+\infty\big),\\
&\sigma_{\textup{sc}}(H[k]_{\eta,\tau,\lambda})=\emptyset,\,\,\text{ and }\,\,\sigma_{\textup{p}}(H[k]_{\eta,\tau,\lambda})\subset(-\sqrt{m^2+k^2},\sqrt{m^2+k^2}),
\end{eqnarray*}
i.e., outside the gap of $\sigma(H[k])$, the spectrum of $H[k]_{\eta,\tau,\lambda}$ is purely absolutely continuous.
Furthermore, since $\sigma(H[k])\cap (-\sqrt{m^2+k^2},\sqrt{m^2+k^2})=\emptyset$ by \eqref{spechk} and the difference of the resolvents 
of $H[k]_{\eta,\tau,\lambda}$ and $H[k]$ is at most a rank two operator it is also clear that there are at most two simple
eigenvalues or one eigenvalue of multiplicity at most two of $H[k]_{\eta,\tau,\lambda}$ inside
$(-\sqrt{m^2+k^2},\sqrt{m^2+k^2})$; cf. \cite[Chapter~9.3, Theorem~3]{BS}. 
In the rest of this section we will investigate these eigenvalues in more detail (and, actually, it will turn out that eigenvalues of multiplicity two do not exist in the present situation).

Take $z\in(-\sqrt{m^2+k^2},\sqrt{m^2+k^2})$ and look for non-trivial solutions of
\begin{equation*}
 H[k]_{\eta,\tau,\lambda}\psi=z\psi.
\end{equation*}
For $z\neq -m$, the corresponding differential equation has the general solution
\begin{equation*}
 \psi(x)=C\begin{pmatrix} 
	      1\\ i \frac{k+\mu}{z+m}
             \end{pmatrix}
\ena{-\mu x}+D\begin{pmatrix}
              1\\  i\frac{k-\mu}{z+m}
              \end{pmatrix}
\ena{\mu x},
\end{equation*}
where $\mu:=\sqrt{m^2+k^2-z^2}>0$. For $z=-m$ (and hence $k\neq 0$) the solution is
\begin{equation*}
 \psi(x)=C\begin{pmatrix} 
	      1\\ -i\frac{m}{k}
             \end{pmatrix}
\ena{k x}+D\begin{pmatrix}
              0\\  1
              \end{pmatrix}
\ena{-k x}.
\end{equation*}

Now, the integrability condition for $\psi$ yields
\begin{equation*}
 \psi(x)=C\begin{pmatrix} 
	      1\\ i \frac{k+\mu}{z+m}
             \end{pmatrix}
\Theta(x)\ena{-\mu x}+D\begin{pmatrix}
              1\\  i\frac{k-\mu}{z+m}
              \end{pmatrix}
\Theta(-x)\ena{\mu x}
\end{equation*}
or
\begin{equation*}
 \psi(x)=C\begin{pmatrix} 
	      1\\ -i\frac{m}{k}
             \end{pmatrix}
\Theta(-\sgn(k)x)\ena{k x}+D\begin{pmatrix}
              0\\  1
              \end{pmatrix}
\Theta(\sgn(k)x)\ena{-k x},
\end{equation*}
for $z\neq -m$ or $z=-m$, respectively; here we have used 
\begin{equation*}
 \Theta(x)=\begin{cases} 1 & \text{if }x>0, \\ 0 & \text{if }x\leq 0.\end{cases}
\end{equation*}

The constants $C$ and $D$ must be chosen in a way that the transmission condition \eqref{eq:TClambda} at $x=0$ holds true. For $z\neq -m$, \eqref{eq:TClambda} is equivalent to
\begin{equation} \label{eq:ev_gen1}
 C\begin{pmatrix} 
	      1\\ i \frac{k+\mu}{z+m}
             \end{pmatrix}=
            D \Lambda \begin{pmatrix}
              1\\  i\frac{k-\mu}{z+m}
              \end{pmatrix},
\end{equation}
and, for $z=-m$, it is equivalent to
\begin{equation*} 
 C\begin{pmatrix} 
	      1\\ - i\frac{m}{k}
             \end{pmatrix}=
            D \Lambda \begin{pmatrix}
              0\\  1
              \end{pmatrix}
\quad\text{or}\quad  
D\begin{pmatrix} 
	      0\\ 1
             \end{pmatrix}=
            C \Lambda \begin{pmatrix}
              1\\  -i\frac{m}{k}
              \end{pmatrix},
\end{equation*}
if $k<0$ or $k>0$, respectively. Since $\omega=0$, the matrix $\Lambda$ in \eqref{eq:LambdaDef} has the form
\begin{equation*}
\Lambda=\frac{1}{d+4}\begin{pmatrix}
4+4\lambda-d &4i(\tau-\eta)\\
-4i(\tau+\eta)& 4-4\lambda-d
\end{pmatrix}. 
\end{equation*}

The discrete eigenvalues of $H[k]_{\eta,\tau,\lambda}$ different from $-m$ are exactly those values of $z\in(-\sqrt{m^2+k^2},\sqrt{m^2+k^2})$ such that \eqref{eq:ev_gen1} has a non-trivial solution $(C,D)$. 
This happens if and only if
$0 = \det R$, where $R$ is the matrix with columns $\big(\begin{smallmatrix}1\\ i \frac{k+\mu}{z+m} \end{smallmatrix} \big)$ and $- \Lambda \big( \begin{smallmatrix} 1\\  i\frac{k-\mu}{z+m}\end{smallmatrix} \big)$,
which is equivalent to
\begin{equation*}
i\frac{k+\mu}{z+m}\left(4+4\lambda-d+4i(\tau-\eta)i\frac{k-\mu}{z+m}\right)=-4i(\eta+\tau)+(4-4\lambda-d)i\frac{k-\mu}{z+m},
\end{equation*}
and can be rearranged as
\begin{equation} \label{eq:char_eq}
 \sqrt{m^2+k^2-z^2}~(d-4)=4(\eta z+\lambda k+\tau m),
\end{equation}
where $\mu=\sqrt{m^2+k^2-z^2}$ is used.
Similarly as above, if $k<0$ then $z=-m$ is an eigenvalue of $H[k]_{\eta,\tau,\lambda}$ if and only if
$$4-4\lambda-d=4(\tau-\eta)\frac{m}{k},$$
which is just \eqref{eq:char_eq} with $z=-m$ and $k<0$. If $k>0$ then $z=-m$ is an eigenvalue of $H[k]_{\eta,\tau,\lambda}$ if and only if
$$4+4\lambda-d=-4(\tau-\eta)\frac{m}{k}.$$
This equality is equivalent to \eqref{eq:char_eq} with $z=-m$ and $k>0$. Consequently, in all cases it is sufficient to study \eqref{eq:char_eq} when looking for the eigenvalues of $H[k]_{\eta,\tau,\lambda}$. The corresponding (non-normalized) eigenfunctions are given as
\begin{equation} \label{eq:EF1}
\psi(x)= \Lambda \begin{pmatrix}
              1\\  i\frac{k-\mu}{z+m}
              \end{pmatrix}
\Theta(x)\ena{-\mu x}+\begin{pmatrix}
              1\\  i\frac{k-\mu}{z+m}
              \end{pmatrix}
\Theta(-x)\ena{\mu x}
\end{equation}
and
\begin{equation} \label{eq:EF2}
 \psi(x)=\begin{cases}
 \Lambda \begin{pmatrix}
              0\\  1
              \end{pmatrix}
\Theta(x)\ena{k x}+\begin{pmatrix}
              0\\  1
              \end{pmatrix}
\Theta(-x)\ena{-k x} & \text{ if } k<0,\\
\begin{pmatrix} 
	      1\\ -i\frac{m}{k}
             \end{pmatrix}
\Theta(-x)\ena{k x}+\Lambda \begin{pmatrix}
              1\\  -i\frac{m}{k}
              \end{pmatrix}
\Theta(x)\ena{-k x} & \text{ if } k>0,
\end{cases}
\end{equation}
for $z\neq -m$  and $z=-m$, respectively.

\subsection{The case $d=4$} \label{sec:d=4}
Note that 
\begin{equation} \label{eq:d4}
4=d=\eta^2-\tau^2-\lambda^2
\end{equation}
implies $\eta\neq 0$. 
Hence, if $d=4$ then \eqref{eq:char_eq} has exactly one solution 
\begin{equation} \label{eq:LinBand}
z=-\frac{\lambda k+\tau m}{\eta}.
\end{equation}

Moreover, using \eqref{eq:d4} and \eqref{eq:LinBand} one verifies that $z^2<m^2+k^2$ is equivalent to $2\lambda\tau km<(4+\lambda^2)m^2+(4+\tau^2)k^2$, which holds true except for the case $m=k=0$, due to the Young inequality. Consequently, for $m\neq 0$ and $k\in\R$ we observe that $z\in(-\sqrt{m^2+k^2},\sqrt{m^2+k^2})$; for $m=0$ this holds true for all $k\in\R\setminus\{0\}$.

\subsection{The case $d\neq 4$}
For the existence of a solution $z\in(-\sqrt{m^2+k^2},\sqrt{m^2+k^2})$ to \eqref{eq:char_eq} we necessarily need
\begin{equation} \label{eq:EVnecess}
(d-4)(\eta z+\lambda k+\tau m)>0.
\end{equation}
Squaring \eqref{eq:char_eq} we get the following quadratic equation in $z$: 
\begin{equation} \label{eq:char_eqSquared}
\Big(\eta^2+\Big(\frac{d}{4}-1\Big)^2\Big)z^2+2\eta(\lambda k+\tau m)z+(\lambda k+\tau m)^2-(m^2+k^2)\Big(\frac{d}{4}-1\Big)^2=0,
\end{equation}
where the discriminant is 
\begin{equation} \label{eq:diskr}
4 \left( \frac{d}{4}-1 \right)^2 \left[\Big(\eta^2+\Big(\frac{d}{4}-1\Big)^2-\lambda^2\Big)k^2-2\lambda\tau m k+m^2\Big(\eta^2+\Big(\frac{d}{4}-1\Big)^2-\tau^2\Big)\right]. 
\end{equation}
Note that 
\begin{equation*} 
\eta^2+\Big(\frac{d}{4}-1\Big)^2-\lambda^2=\tau^2+\Big(\frac{d}{4}+1\Big)^2>0
\end{equation*}
and
$$\eta^2+\Big(\frac{d}{4}-1\Big)^2-\tau^2=\lambda^2+\Big(\frac{d}{4}+1\Big)^2>0,$$
and hence the expression in the square brackets in \eqref{eq:diskr} can be rewritten in the form 
$$
\Big(\tau^2+\Big(\frac{d}{4}+1\Big)^2\Big)k^2-2\lambda\tau m k+ \Big(\lambda^2+\Big(\frac{d}{4}+1\Big)^2\Big) m^2.
$$

One verifies that the values of this polynomial in the variable $k$ are strictly positive if $m\not=0$ and non-negative if $m=0$ (in which case there is a zero at $k=0$).
Since the case $m=k=0$ is not allowed in this section (recall that we consider $z\in(-\sqrt{m^2+k^2},\sqrt{m^2+k^2})$) we conclude that the discriminant \eqref{eq:diskr} is
strictly positive and hence \eqref{eq:char_eqSquared} has the following pair of real solutions 
\begin{equation} \label{eq:EV}
z_\pm=\frac{-\eta(\lambda k+\tau m)\pm \big|\frac{d}{4}-1\big|\sqrt{\big(\tau^2+\big(\frac{d}{4}+1\big)^2\big)k^2-2\lambda\tau m k+\big(\lambda^2+\big(\frac{d}{4}+1\big)^2\big)m^2}}{\eta^2+\big(\frac{d}{4}-1\big)^2}.
\end{equation}
For all real solutions of \eqref{eq:char_eqSquared} that satisfy \eqref{eq:EVnecess}, the equation \eqref{eq:char_eq} holds and all solutions of \eqref{eq:char_eq} are clearly in $[-\sqrt{m^2+k^2}, \sqrt{m^2+k^2}]$. Moreover, $z=\pm\sqrt{m^2+k^2}$ cannot obey \eqref{eq:EVnecess} and \eqref{eq:char_eqSquared} simultaneously. Therefore, the solutions $z_\pm$ in \eqref{eq:EV} satisfying \eqref{eq:EVnecess} lie in $(-\sqrt{m^2+k^2}, \sqrt{m^2+k^2})$.

\begin{remark} \label{rem:band_prop} 
For $m\neq 0$, one verifies that the functions $z_\pm=z_\pm(k)$ are either strictly convex or strictly concave (by computing the second derivatives and using  that the expression in \eqref{eq:diskr} is strictly positive). Consequently, there are five possibilities for the domains of $z_\pm=z_\pm(k)$ that are given exactly by those $k$'s for which \eqref{eq:EVnecess} holds. Namely, these functions are defined either on a bounded open interval or a union of two disjoint  unbounded open intervals or a semi-bounded open interval or $\R$ or nowhere. 
The case $m=0$ is discussed later in Section \ref{sec:dneq4}, where we will show that $z_\pm=z_\pm(k)$ are defined on  certain unions of sets $\emptyset,\, (-\infty,0),$ and $(0,+\infty)$.
\end{remark}

We summarize the results obtained in this section in the following theorem:
\begin{theorem} \label{theo:spectra_fibers}
Let $m,k\in\R$ and assume that $\eta,\tau,\lambda\in\R$ are such that $d\neq -4$. Then 
\begin{eqnarray*}
&\sigma_{\textup{ac}}(H[k]_{\eta,\tau,\lambda})=\bigl(-\infty,-\sqrt{m^2+k^2}\bigr]\cup\bigl[\sqrt{m^2+k^2},+\infty\bigr)\\
&\sigma_{\textup{sc}}(H[k]_{\eta,\tau,\lambda})=\emptyset,\,\,\text{ and }\,\,\sigma_{\textup{p}}(H[k]_{\eta,\tau,\lambda})\subset(-\sqrt{m^2+k^2},\sqrt{m^2+k^2}).
\end{eqnarray*}
Furthermore, if $m\not=0$ or $k\not=0$, then $(-\sqrt{m^2+k^2},\allowbreak \sqrt{m^2+k^2})$ is a gap in the essential spectrum of $H[k]_{\eta,\tau,\lambda}$ and there are at most two isolated simple eigenvalues
of $H[k]_{\eta,\tau,\lambda}$ inside this gap:
\begin{itemize}
 \item [{\rm (i)}] If $d=4$, then $z$ given in \eqref{eq:LinBand} is the only eigenvalue of $H[k]_{\eta,\tau,\lambda}$.
 \item [{\rm (ii)}] If $d\neq 4$, then exactly those $z_\pm$ given in \eqref{eq:EV} that obey \eqref{eq:EVnecess} are eigenvalues of $H[k]_{\eta,\tau,\lambda}$.
\end{itemize}
In both cases, the corresponding eigenfunctions are given by \eqref{eq:EF1} and \eqref{eq:EF2}.
\end{theorem}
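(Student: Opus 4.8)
The plan is to collect the statement from the computations performed throughout this section; most of the work is already done, so the proof is largely an assembly together with two small verifications.

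\emph{Essential and continuous spectrum.} By the Krein-type resolvent formula \eqref{krein_formula}, the difference $(H[k]_{\eta,\tau,\lambda}-z)^{-1}-(H[k]-z)^{-1}$ is an operator of rank at most two, its range being spanned by the two columns of $x\mapsto G_z(x)$. Hence Weyl's theorem and \eqref{spechk} give $\sigma_{\textup{ess}}(H[k]_{\eta,\tau,\lambda})=\sigma_{\textup{ess}}(H[k])=(-\infty,-\sqrt{m^2+k^2}]\cup[\sqrt{m^2+k^2},+\infty)$. To upgrade this to the claimed identity for $\sigma_{\textup{ac}}$ and to $\sigma_{\textup{sc}}(H[k]_{\eta,\tau,\lambda})=\emptyset$ I would mimic \cite[Proposition 2.3]{PaRi_14}: using the explicit expressions \eqref{def_G_z} for $G_z$ and for $C_z$, one shows that the boundary values of $(H[k]_{\eta,\tau,\lambda}-z)^{-1}$ as $z$ approaches the real axis from above and below exist in the topology of operators between suitably weighted $L^2$-spaces, which excludes singular continuous spectrum and forces the absolutely continuous part to fill the whole essential spectrum. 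Since $\sigma(H[k])$ has no points in the open interval $(-\sqrt{m^2+k^2},\sqrt{m^2+k^2})$ and the resolvent perturbation is finite rank, any spectrum of $H[k]_{\eta,\tau,\lambda}$ there consists of isolated eigenvalues, so $\sigma_{\textup{p}}(H[k]_{\eta,\tau,\lambda})\subset(-\sqrt{m^2+k^2},\sqrt{m^2+k^2})$.

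\emph{Eigenvalues in the gap and their simplicity.} Assume $(m,k)\neq(0,0)$, so the gap is nonempty. By \cite[Chapter~9.3, Theorem~3]{BS}, a resolvent perturbation of rank at most two adds at most two eigenvalues (counting multiplicity) inside the gap. To identify them, one solves $H[k]_{\eta,\tau,\lambda}\psi=z\psi$ for $z$ in the gap as above: the general solution of the ODE on $\R\setminus\{0\}$ was written down, the $L^2$-condition selects the exponentially decaying branch on each half-line, leaving two free constants $C,D$, and imposing the transmission condition \eqref{eq:TClambda} yields a homogeneous $2\times2$ linear system in $(C,D)$. A nontrivial solution exists if and only if its determinant vanishes, and rearranging this determinant condition produces exactly the characteristic equation \eqref{eq:char_eq} (the value $z=-m$, possible only for $k\neq0$, is treated separately and again gives \eqref{eq:char_eq} with $z=-m$). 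Because the first column of that matrix is never zero, the solution space is always one-dimensional; hence each eigenvalue in the gap is simple and there are at most two of them, so no eigenvalue of multiplicity two can occur.

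\emph{Solving the characteristic equation.} If $d=4$ the left-hand side of \eqref{eq:char_eq} vanishes, forcing $\eta z+\lambda k+\tau m=0$; since $d=4$ implies $\eta\neq0$ this has the unique solution \eqref{eq:LinBand}, and Section \ref{sec:d=4} shows, via Young's inequality, that this $z$ lies in $(-\sqrt{m^2+k^2},\sqrt{m^2+k^2})$ for every $(m,k)\neq(0,0)$; thus it is an eigenvalue, proving~(i). If $d\neq4$, squaring \eqref{eq:char_eq} gives the quadratic \eqref{eq:char_eqSquared} whose discriminant \eqref{eq:diskr} was rewritten as $4(\tfrac{d}{4}-1)^2$ times a quadratic form in $k$ that is positive definite for $(m,k)\neq(0,0)$; hence the discriminant is strictly positive and \eqref{eq:char_eqSquared} has the two real roots $z_\pm$ of \eqref{eq:EV}. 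Squaring may create spurious roots, so one keeps precisely those $z_\pm$ for which the sign condition \eqref{eq:EVnecess} holds, which is equivalent to \eqref{eq:char_eq} itself being satisfied; for such a root \eqref{eq:char_eq} forces $z^2\le m^2+k^2$, and since $z=\pm\sqrt{m^2+k^2}$ cannot satisfy \eqref{eq:EVnecess} and \eqref{eq:char_eqSquared} simultaneously, these $z_\pm$ lie strictly inside the gap, proving~(ii). Substituting an eigenvalue back into the coefficient relations reproduces the eigenfunctions \eqref{eq:EF1} and \eqref{eq:EF2}.

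\emph{Main obstacle.} The only genuinely delicate point is the continuous-spectrum statement: passing from $\sigma_{\textup{ess}}$ via Weyl's theorem to the claim that $\sigma_{\textup{ac}}$ fills the whole essential spectrum and $\sigma_{\textup{sc}}=\emptyset$ requires the limiting-absorption analysis of the explicit resolvent in the spirit of \cite[Proposition 2.3]{PaRi_14}, and is the one place where one has to work with boundary values of the resolvent rather than with finite-rank algebra and elementary ODE computations. Everything else reduces to bookkeeping plus the sign-of-discriminant and sign-of-\eqref{eq:EVnecess} checks needed to discard the spurious roots introduced by squaring.
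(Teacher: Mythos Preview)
Your proposal is correct and follows essentially the same route as the paper: the Krein-type resolvent formula~\eqref{krein_formula} for the rank-two perturbation, Weyl's theorem together with the limiting-absorption argument of \cite[Proposition~2.3]{PaRi_14} for the continuous spectrum, the Birman--Solomjak bound on the number of eigenvalues in the gap, and the explicit ODE/transmission-condition computation leading to the characteristic equation~\eqref{eq:char_eq} and its analysis in the two cases $d=4$ and $d\neq4$. Your explicit remark that the first column of the coefficient matrix never vanishes is a clean way to confirm simplicity, which the paper states but does not spell out.
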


\section{Two-dimensional Dirac operators with a $\delta$-shell interaction}
\label{sec:2dOp}

In this section we investigate the two-dimensional Dirac operator with a $\delta$-shell interaction supported on the straight line $\{(0,y)|\, y\in\R\}$ 
associated with the formal differential expression \eqref{eq:formal}. For $\eta,\tau,\lambda\in\R$ such that $d=\eta^2-\tau^2-\lambda^2\not=-4$, we first define an operator $H_{\eta,\tau,\lambda}$ with the help of the direct integral of the fiber operators
$H[k]_{\eta,\tau,\lambda}$ by
\begin{equation*}
\begin{split}
 (H_{\eta,\tau,\lambda}\psi)(k)&=H[k]_{\eta,\tau,\lambda}\psi(\cdot,k),\\
\dom(H_{\eta,\tau,\lambda})&=\Big\{\psi\in L^2(\R^2,\dd x\dd k;\C^2)|\, \psi(\cdot,k)\in\dom(H[k]_{\eta,\tau,\lambda})\, \text{a.e.},\\
&\qquad\qquad\qquad\qquad\qquad\qquad\int_{\R}\|H[k]_{\eta,\tau,\lambda}\psi(\cdot,k)\|^2\dd k<\infty\Big\},
\end{split} 
\end{equation*}
where $L^2(\R^2,\dd x\dd k;\C^2)=\int_\R^\oplus L^2(\R,\dd x;\C^2)\dd k$;
cf. Appendix~\ref{section_appendix} for a brief summary of the direct integral of Hilbert spaces and self-adjoint operators. Since the fiber operators $H[k]_{\eta,\tau,\lambda}$ are self-adjoint in $L^2(\R;\C^2)$ and  $k\mapsto H[k]_{\eta,\tau,\lambda}$ is measurable (in the sense that $k\mapsto \langle f,(H[k]_{\eta,\tau,\lambda}-i)^{-1}g\rangle_{L^2(\mathbb{R}; \mathbb{C}^2)}$ is measurable), the operator 
$H_{\eta,\tau,\lambda}$ is self-adjoint in $L^2(\R^2;\C^2)$. 
Now the Dirac operator with a $\delta$-potential supported on the straight line is given by
\begin{equation}\label{diracii}
\hat{H}_{\eta,\tau,\lambda} := \mathcal{F}_{y \rightarrow k}^{-1} H_{\eta,\tau,\lambda} \mathcal{F}_{y \rightarrow k},
\end{equation}
and since  $\mathcal{F}_{y \rightarrow k}$ is unitary it suffices to study the spectral properties of $H_{\eta,\tau,\lambda}$.
Furthermore, in view of~\eqref{spectrum_direct_integral} and~\eqref{point_spectrum} the spectral analysis of $H_{\eta,\tau,\lambda}$ 
reduces to the spectral analysis of the one-parametric family of one-dimensional operators $H[k]_{\eta,\tau,\lambda}$. More precisely, in the present situation we have
$z\in \sigma(H_{\eta,\tau,\lambda})$ if and only 
\begin{equation*}
 \bigl|\bigl\{ k \in \R\,|\, \sigma(H[k]_{\eta,\tau,\lambda}) \cap (z-\varepsilon, z+\varepsilon) \neq \emptyset\bigr\}\bigr| > 0 \,\,\text{for all}\,\,\varepsilon>0
\end{equation*}
and $z\in \sigma_\textup{p}(H_{\eta,\tau,\lambda})$ if and only 
\begin{equation}\label{use2}
\bigl|\bigl\{ k \in \R\,|\, z \in \sigma_\textup{p}(H[k]_{\eta,\tau,\lambda})\bigr\}\bigr| > 0;
\end{equation}
here $|B|$ denotes the Lebesgue measure of $B \subset \mathbb{R}$.
Recall also from the discussion after \eqref{point_spectrum} that each eigenvalue of $H_{\eta,\tau,\lambda}$ has infinite multiplicity.
Combining this with Theorem \ref{theo:spectra_fibers} and Corollary  \ref{corollary_ac_applicable} we will get a full picture of the spectrum of $H_{\eta,\tau,\lambda}$ in 
Theorem~\ref{theo:full_spectrum} below. 

\begin{remark}
In the purely electrostatic case, i.e., when $\tau=\lambda=0$, the spectrum of the 
operator $\mathcal{F}_{y \rightarrow k}^{-1} H_{\eta,0,0} \mathcal{F}_{y \rightarrow k}$ was studied in \cite{BeHoTu_21} 
with the help of boundary triples and Weyl functions. The statements proved there are recovered in the present paper; cf. Section~\ref{section_electrostatic}.
We also note that the methods used in \cite{BeHoTu_21} allow a more explicit description of the domain of the Dirac operator \eqref{diracii} in terms of traces.   
\end{remark}

\subsection{Eigenvalues}
\subsubsection{The case $d=4$} If $d=4$ then there is a single band $z=z(k)$ in $\sigma_\textup{p}(H[k]_{\eta,\tau,\lambda})$ given by \eqref{eq:LinBand}. For $\lambda\neq 0$, it is linear and non-constant. Therefore, it does not contribute to the point spectrum of $H_{\eta,\tau,\lambda}$. If $\lambda=0$ then the band  is constant  and defined on $\R$ or $\R\setminus\{0\}$ for $m\neq 0$ or $m=0$, respectively. In both cases, we conclude from \eqref{use2} that
$$z=-\frac{\tau m}{\eta}$$
is an eigenvalue of $H_{\eta,\tau,\lambda}$ of infinite multiplicity. 

\subsubsection{The case $d\neq 4$} \label{sec:dneq4} If $d\neq 4$ (recall that the case $d\neq -4$ is not investigated) then according to Theorem \ref{theo:spectra_fibers}, there are at most two energy bands $z_\pm$ given by \eqref{eq:EV} that may contribute to an eigenvalue of the full operator. 
If $m\neq 0$ then the function $z_\pm=z_\pm(k)$ is always non-linear and the equation $z_\pm(k)=C$ has at most two solutions for any constant $C\in\R$. Therefore, there cannot be any eigenvalue in the spectrum of the full operator in this case. 

If $m=0$ then \eqref{eq:EV} simplifies to
\begin{equation} \label{eq:LinBand2}
z_\pm =\frac{-\eta\lambda k\pm|\frac{d}{4}-1|\sqrt{\tau^2+(\frac{d}{4}+1)^2}~|k|}{\eta^2+(\frac{d}{4}-1)^2}.
\end{equation}
Only $z_\pm\in(-|k|,|k|)$ that satisfy \eqref{eq:EVnecess}, i.e., 
\begin{equation} \label{condition_m_0} 
  (d-4)\left(\eta \frac{-\eta\lambda \pm \sgn k |\frac{d}{4}-1|\sqrt{\tau^2+(\frac{d}{4}+1)^2}}{\eta^2+(\frac{d}{4}-1)^2}+\lambda \right) k>0,
\end{equation}
are eigenvalues of $H[k]_{\eta,\tau,\lambda}$. Clearly, if~\eqref{condition_m_0}  is satisfied for one $k \in (0, +\infty)$, then~\eqref{condition_m_0}  holds for all $k \in (0, +\infty)$; a similar consideration holds for $k \in (-\infty, 0)$.
We see now that  $z_\pm=z_\pm(k)$ is a linear function on $(-\infty,0)$ or $(0,+\infty)$, respectively. Therefore, if $z_\pm$ is an eigenvalue of $H_{\eta,\tau,\lambda}$ then $z_\pm=z_\pm(k)$ must be constant on $(-\infty,0)$ or $(0,+\infty)$, respectively. Squaring the necessary condition \eqref{eq:char_eq} for the discrete eigenvalues of $H[k]_{\eta,\tau,\lambda}$ we get
$$(k^2-z_\pm^2)(d-4)^2=16(\lambda^2k^2+2\lambda\eta z_\pm k+\eta^2 z_\pm^2).$$
Assuming that $z_\pm=z_\pm(k)$ is constant this yields
\begin{equation} \label{eq:quadr_coeff}
(d-4)^2=16\lambda^2,\quad \lambda\eta z_\pm=0,\quad -(d-4)^2 z_\pm^2=16\eta^2 z_\pm^2.
\end{equation}
From the last equation it follows that either $z_\pm=0$  or  $-(d-4)^2=16\eta^2$. Taking the first equation in \eqref{eq:quadr_coeff} into account, the latter equality implies $\lambda=\eta=0$ and $d=4$, which is not possible. Hence, it remains to consider the case  $z_\pm\equiv z=0\,(=m)$. Then \eqref{eq:char_eq} and \eqref{eq:EVnecess} take the form
\begin{equation} \label{eq:embedded_cond}
(d-4)|k|=4\lambda k \quad \text{and} \quad (d-4)\lambda k>0,
\end{equation}
respectively. If $d-4=4\lambda$ then \eqref{eq:embedded_cond} holds true for all $k>0$. If $4-d=4\lambda$ then \eqref{eq:embedded_cond} holds true for all $k<0$. We conclude that $z=0$ is an  eigenvalue of infinite multiplicity of $H_{\eta,\tau,\lambda}$, whenever $m=0$ and $0\neq d-4=\pm 4\lambda$. 

\subsection{Continuous (bulk) spectrum} The band $z=z(k)$ in~\eqref{eq:LinBand} is everywhere defined, except for the case $m=0$ when it is defined on $\R\setminus\{0\}$, and the functions $z_\pm=z_\pm(k)$ given by~\eqref{eq:EV} are defined on unions of at most two disjoint open intervals, see Remark \ref{rem:band_prop}. Moreover, $z$ and $z_\pm$ are real-analytic on their domains of definition and hence, they
obey the assumptions of Corollary \ref{corollary_ac_applicable}. Thus, we infer that $\sigma_{\textup{sc}}(H_{\eta,\tau,\lambda})=\emptyset$. Moreover, for $d=4$,
$$\sigma(H_{\eta,\tau,\lambda})=(-\infty,-|m|]\cup[|m|,+\infty)\cup \overline{\ran(z)}, $$
and for $d\neq 4$,
$$\sigma(H_{\eta,\tau,\lambda})=(-\infty,-|m|]\cup[|m|,+\infty)\cup \overline{\ran(z_+)}\cup \overline{\ran(z_-)}.$$

If $\sigma_{\textup{p}}(H_{\eta,\tau,\lambda})=\emptyset$ then $\sigma(H_{\eta,\tau,\lambda})=\sigma_{\textup{ac}}(H_{\eta,\tau,\lambda})$. We have seen in the  previous subsection that in special cases there is a single eigenvalue due to a constant band. It may be either embedded in the absolutely continuous spectrum or isolated. In the first case we still have $\sigma(H_{\eta,\tau,\lambda})=\sigma_{\textup{ac}}(H_{\eta,\tau,\lambda})$, because the absolutely continuous spectrum is closed by definition. The full description of  $\sigma(H_{\eta,\tau,\lambda})$ is  as follows:

\begin{theorem} \label{theo:full_spectrum}
Let $m\in\R$ and assume that $\eta,\tau,\lambda\in\R$ are such that $d\neq -4$. Then 
$\sigma_{\textup{sc}}(H_{\eta,\tau,\lambda})=\emptyset$ and the following holds:
\begin{itemize}
\item [{\rm (i)}] If $d=4$ and $\lambda\neq 0$, then $\sigma_{\textup{ac}}(H_{\eta,\tau,\lambda})=\R$ and  $\sigma_{\textup{p}}(H_{\eta,\tau,\lambda})=\emptyset$.  
\item [{\rm (ii)}] If $d=4$ and $\lambda=0$, then $\sigma_{\textup{ac}}(H_{\eta,\tau,\lambda})=(-\infty,-|m|]\cup[|m|,+\infty)$. Moreover, $z=-\tau m/\eta$ is an eigenvalue of infinite multiplicity, which is isolated in $(-|m|,|m|)$ for $m\neq 0$ and embedded for $m=0$. 
\item [{\rm (iii)}] If $d\neq 4$, then $\sigma_{\textup{ac}}(H_{\eta,\tau,\lambda})=(-\infty,-|m|]\cup[|m|,+\infty)\cup \overline{\ran(z_+)}\cup \overline{\ran(z_-)}.$ Moreover, $\sigma_{\textup{p}}(H_{\eta,\tau,\lambda})=\emptyset$, except when $m=0$ and $d-4=\pm 4\lambda\, (\neq 0)$ in which case $z=0$ is an  embedded eigenvalue of infinite multiplicity.
\end{itemize}
\end{theorem}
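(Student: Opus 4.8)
The plan is to assemble Theorem~\ref{theo:full_spectrum} from three previously established ingredients: the fiberwise spectral description in Theorem~\ref{theo:spectra_fibers}, the direct-integral correspondences~\eqref{use2} together with the fact that eigenvalues of $H_{\eta,\tau,\lambda}$ have infinite multiplicity, and the no-singular-continuous-spectrum criterion of Corollary~\ref{corollary_ac_applicable}. The overall structure is: (1) show $\sigma_{\textup{sc}}(H_{\eta,\tau,\lambda})=\emptyset$ once and for all; (2) compute $\overline{\ran(z)}$ or $\overline{\ran(z_\pm)}$ and combine with the bulk $(-\infty,-|m|]\cup[|m|,+\infty)$ to get $\sigma(H_{\eta,\tau,\lambda})$; (3) identify which bands are constant, hence which values lie in $\sigma_{\textup{p}}(H_{\eta,\tau,\lambda})$; (4) argue that the remaining spectrum is absolutely continuous.

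For step~(1) I would invoke Corollary~\ref{corollary_ac_applicable}: the fiber operators have $\sigma_{\textup{sc}}(H[k]_{\eta,\tau,\lambda})=\emptyset$, their essential spectrum $(-\infty,-\sqrt{m^2+k^2}]\cup[\sqrt{m^2+k^2},+\infty)$ depends on $k$ only through the continuous function $\sqrt{m^2+k^2}$, and the eigenvalue branches $z=z(k)$ in~\eqref{eq:LinBand} and $z_\pm=z_\pm(k)$ in~\eqref{eq:EV} are real-analytic on their (open) domains of definition — each such domain being a union of at most two open intervals by Remark~\ref{rem:band_prop} and the discussion in Section~\ref{sec:dneq4}. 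Hence the hypotheses of the appendix criterion are met and $\sigma_{\textup{sc}}(H_{\eta,\tau,\lambda})=\emptyset$. For step~(2) I would use the description of $\sigma(H_{\eta,\tau,\lambda})$ via~\eqref{use2}: a point $z$ lies in the spectrum iff the set of $k$ for which $\sigma(H[k]_{\eta,\tau,\lambda})$ meets every neighbourhood of $z$ has positive measure. Since $\sigma(H[k]_{\eta,\tau,\lambda})=(-\infty,-\sqrt{m^2+k^2}]\cup[\sqrt{m^2+k^2},+\infty)\cup\{\text{eigenvalues}\}$, the contribution of the essential parts is $\bigcup_k\bigl((-\infty,-\sqrt{m^2+k^2}]\cup[\sqrt{m^2+k^2},+\infty)\bigr)=(-\infty,-|m|]\cup[|m|,+\infty)$, and the eigenvalue branches contribute exactly $\overline{\ran(z)}$ (case $d=4$) or $\overline{\ran(z_+)}\cup\overline{\ran(z_-)}$ (case $d\neq4$); the closure appears because spectrum is closed and because a band endpoint may be approached but not attained. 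This gives the stated formulas for $\sigma(H_{\eta,\tau,\lambda})$.

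For step~(3), the point spectrum is governed by~\eqref{use2}: $z\in\sigma_{\textup{p}}(H_{\eta,\tau,\lambda})$ iff $\{k:\ z\in\sigma_{\textup{p}}(H[k]_{\eta,\tau,\lambda})\}$ has positive measure, which by the analyticity of the bands happens precisely when some band is constant on an interval (and then constant on its whole connected domain). I would then run through the cases exactly as done in the text preceding the theorem: if $d=4$ the unique band~\eqref{eq:LinBand} is non-constant when $\lambda\neq0$ (giving $\sigma_{\textup{p}}=\emptyset$, case~(i)) and constant $=-\tau m/\eta$ when $\lambda=0$ (case~(ii), with the band defined on $\R$ if $m\neq0$, on $\R\setminus\{0\}$ if $m=0$, which is why the eigenvalue is isolated in the gap for $m\neq0$ and embedded at $0$ for $m=0$); if $d\neq4$ and $m\neq0$ the bands $z_\pm$ are strictly convex or concave by Remark~\ref{rem:band_prop}, hence never constant, so $\sigma_{\textup{p}}=\emptyset$; if $d\neq4$ and $m=0$ the bands~\eqref{eq:LinBand2} are piecewise linear, and solving~\eqref{eq:quadr_coeff} forces the only possible constant value to be $z=0$, occurring exactly when $0\neq d-4=\pm4\lambda$ via~\eqref{eq:embedded_cond}, giving the embedded eigenvalue of infinite multiplicity in case~(iii). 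The infinite multiplicity of every eigenvalue of a direct integral was recorded after~\eqref{point_spectrum}.

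Finally, for step~(4): since $\sigma_{\textup{sc}}(H_{\eta,\tau,\lambda})=\emptyset$ by step~(1), we have $\sigma(H_{\eta,\tau,\lambda})=\sigma_{\textup{ac}}(H_{\eta,\tau,\lambda})\cup\sigma_{\textup{p}}(H_{\eta,\tau,\lambda})$; removing from the total spectrum computed in step~(2) the at most one point found in step~(3) — which in cases~(i) and~(iii) is nothing, and in case~(ii) with $m=0$ and in case~(iii) with $m=0$ is a point already contained in $\overline{\ran}$ of a band and hence in the closed set $\sigma_{\textup{ac}}$ — yields the stated $\sigma_{\textup{ac}}$. In case~(ii) with $m\neq0$ the eigenvalue $-\tau m/\eta$ is isolated in the gap $(-|m|,|m|)$, so it is removed from $\sigma_{\textup{ac}}$, leaving $(-\infty,-|m|]\cup[|m|,+\infty)$ exactly. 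The one point requiring a little care — and the main obstacle — is verifying that in the cases with a constant band at an interior gap value (that is, case~(ii) with $m\neq0$) this value is genuinely isolated from $\sigma_{\textup{ac}}$, i.e.\ that no band $z_\pm$ or bulk edge accumulates at it; but here $d=4$ forces the single-band situation, so there is nothing to accumulate, and isolation is immediate. Everywhere else the assembly is bookkeeping on top of the three cited results.
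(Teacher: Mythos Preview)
Your proposal is correct and follows essentially the same approach as the paper: invoke Corollary~\ref{corollary_ac_applicable} for $\sigma_{\textup{sc}}=\emptyset$ via real-analyticity of the bands, read off $\sigma$ and $\sigma_{\textup{p}}$ from the direct-integral formulas and the constant/non-constant dichotomy for the bands exactly as in Sections~6.1--6.2, and then identify $\sigma_{\textup{ac}}$ by subtraction. One small slip: in step~(2) the characterization you quote for $\sigma(H_{\eta,\tau,\lambda})$ is the one displayed just \emph{before}~\eqref{use2} (equivalently Corollary~\ref{corollary_ac_applicable}(ii) or~\eqref{spectrum_direct_integral}), not~\eqref{use2} itself, which pertains only to $\sigma_{\textup{p}}$.
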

We refer to \cite{GrLe_06} for a similar result in the (decoupled) case $d=-4$.

\subsection{Linear bands} \label{sec:LinBands}
Quantum mechanical interpretation of the eigenvalues of a Hamiltonian is quite straightforward--these are just energies of the stationary states (which are described by the respective eigenfunctions). The points of the absolutely continuous spectrum are energies at which the system described by the Hamiltonian exhibits transport, see \cite{BrJaLaPi_16} for a possible mathematical explanation of this relationship. In fact, if the Hamiltonian is invariant with respect to translations in one direction, one can apply the partial Fourier transform to decompose the operator into a direct integral and then investigate some finer  properties of such a transport. In particular, there is a  direct way how to construct normalizable low-dispersing wave packets, which are also referred to as the edge states, that propagate in the direction of the symmetry using the eigenfunctions associated with an energy band \cite{JaTu_17}. For linear bands, such wave packets do not disperse at all! Their group velocity is given by $v=\dd z/\dd k$, where $z=z(k)$ stands for the energy band and $k$ is the momentum in the direction of the symmetry.

However, only few systems with exactly linear bands are known so far \cite{GrLe_06,JaTu_17,PeNe_09}. Therefore, it is remarkable that the  operator $H_{\eta,\tau,\lambda}$ considered in this paper may possess linear bands for specific choices of the coupling constants.
Firstly, we always get the linear band \eqref{eq:LinBand} when $d=4$. The group velocity of the dispersion-less wave packets equals 
$$v=\frac{\dd z}{\dd k}=-\frac{\lambda}{\eta}\in(-1,1);$$
introducing the physical units to our model, this would mean that $|v|<c$ (the speed of light) or $|v|<v_F$ (the Fermi velocity) if we describe a relativistic particle or electronic states in a Dirac material, respectively. Secondly, if $d\neq 4$ and $m=0$, there may be linear bands \eqref{eq:LinBand2} defined on $(-\infty,0)$ or on $(0,+\infty)$; cf.~\eqref{condition_m_0} and the discussion following this equation.

\subsection{Special cases}
Looking more closely at several important examples, we will show in this section that tuning the coupling constants one can change the spectrum of the free operator $\hat{H}$ dramatically. Applying Theorem \ref{theo:full_spectrum} with the unitarily equivalent operator $H_{0,0,0}$ we infer that
$$\sigma(\hat{H})=\sigma_{\textup{ac}}(\hat{H})=(-\infty,-|m|]\cup[|m|,+\infty),\quad \sigma_{\textup{sc}}(\hat{H})=\sigma_{\textup{p}}(\hat{H})=\emptyset.$$
We will see that the gap $(-|m|,|m|)$ may be shrunk arbitrarily from the top, from  the bottom or even from both endpoints simultaneously. In particular, it is possible to close the gap. On the other hand, it is also possible to create an eigenvalue of infinite multiplicity anywhere in the gap, see the second point of Theorem \ref{theo:full_spectrum}. 

\subsubsection{Purely electrostatic interaction} \label{section_electrostatic}
Let $\tau=\lambda=0$. Then $d=4$ if and only if $\eta=\pm 2$. Substituting this into \eqref{eq:LinBand} we get $z=z(k)\equiv 0$. If $\eta\neq \pm 2$ then the solutions \eqref{eq:EV} that obey \eqref{eq:EVnecess} constitute the energy bands. One can rewrite \eqref{eq:EVnecess} as follows
\begin{equation*}
\sgn z=\begin{cases}
\sgn\eta & \text{ for }\eta^2>4\\
-\sgn\eta & \text{ for }\eta^2<4.
\end{cases}
\end{equation*}
Furthermore, \eqref{eq:EV} simplifies to
\begin{equation*}
z_\pm=\pm\frac{|4-\eta^2|}{4+\eta^2}\sqrt{m^2+k^2}.
\end{equation*}
Therefore, there is only one eigenvalue of $H[k]_{\eta,0,0}$ in the gap of $\sigma_\textup{ess}(H[k]_{\eta, 0,0})$, namely
\begin{equation*}
z=\sgn\eta\, \frac{\eta^2-4}{\eta^2+4}\sqrt{m^2+k^2}.
\end{equation*}
Consequently, we have
\begin{equation} \label{eq:spec_es}
\sigma(H_{\eta,0,0})=\begin{cases}
(-\infty,-|m|]\cup\{0\}\cup[|m|,+\infty) & \text{ for }\eta=\pm 2\\
(-\infty,\frac{4-\eta^2}{4+\eta^2}|m|]\cup[|m|,+\infty) & \text{ for }\eta\in(-\infty,-2)\\
(-\infty,-|m|]\cup[\frac{4-\eta^2}{4+\eta^2}|m|,+\infty) & \text{ for }\eta\in(-2,0)\\
(-\infty,\frac{\eta^2-4}{\eta^2+4}|m|]\cup[|m|,+\infty) & \text{ for }\eta\in(0,2)\\
(-\infty,-|m|]\cup[\frac{\eta^2-4}{\eta^2+4}|m|,+\infty) & \text{ for }\eta\in(2,+\infty).\\
\end{cases}
\end{equation}
This has been already observed in \cite{BeHoTu_21} employing a different approach.
See Figure~\ref{fig:es} for plots of the energy bands in several typical situations.
Note that at $\eta=\pm 2$ the spectrum changes quite dramatically.

\begin{figure}[h] 
\begin{center}
   \includegraphics[width=\textwidth]{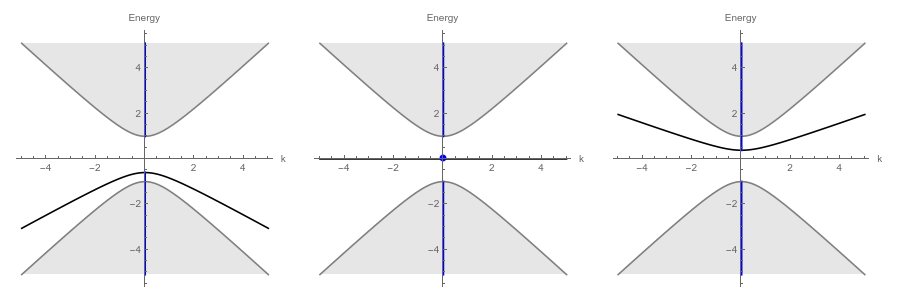}  
   \caption{Purely electrostatic case with $m=1$ and $\eta=1$ (left), $\eta=2$ (middle), and  $\eta=3$ (right). The bulk spectrum bounded by the curves $\pm\sqrt{m^2+k^2}$ is depicted in gray. The thick black line shows the energy band. The spectrum of the full operator $H_{\eta,0,0}$ is presented in blue. Note that for $\eta=2$ the energy band is identically zero which implies that zero is an eigenvalue of infinite multiplicity in the spectrum of $H_{\eta,0,0}$. The same holds true when $\eta=-2$.} \label{fig:es}
\end{center} 
\end{figure}

\subsubsection{Purely Lorentz scalar interaction} \label{sec:magnetic}
Let $\eta=\lambda=0$. Then $d=-\tau^2\neq 4$. If $\tau m<0$ then there are  eigenvalues 
\begin{equation*}
z_\pm=\pm\sqrt{\Big(\frac{4-\tau^2}{4+\tau^2}\Big)^2m^2+k^2}
\end{equation*}
of $H[k]_{0,\tau,0}$ in the gap of $\sigma_\textup{ess}(H[k]_{0,\tau,0})$. If  $\tau m\geq 0$ then there are no eigenvalues in this gap. Note that these results are true also in the decoupled case when $\tau=\pm 2$ \cite{GrLe_06}. We conclude that
\begin{equation*}
\sigma(H_{0,\tau,0})=\begin{cases}
(-\infty,-|m|]\cup[|m|,+\infty) & \text{ if } \tau m\geq 0\\
(-\infty,-\frac{|4-\tau^2|}{4+\tau^2}|m|]\cup[\frac{|4-\tau^2|}{4+\tau^2}|m|,+\infty) & \text{ if } \tau m<0.
\end{cases}
\end{equation*}
Figure \ref{fig:ls} depicts the energy bands in several typical situations.

\begin{figure}[h]
\begin{center}
   \includegraphics[width=\textwidth]{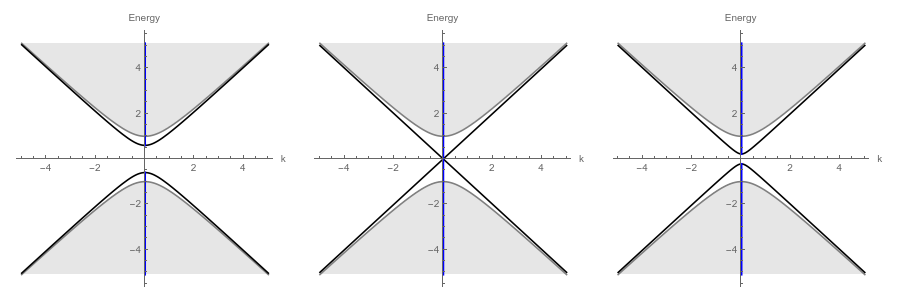}  
   \caption{Purely Lorentz scalar case with $m=1$ and $\tau=-1$ (left), $\tau=-2$ (middle), and  $\tau=-2.5$ (right). The bulk spectrum is depicted in gray and the thick black lines show the energy bands again. If $m>0$ then there are energy bands only for $\tau<0$. The spectrum of the full operator $H_{0,\tau,0}$ is presented in blue. For $\tau=-2$ there are two linear energy bands and the spectrum of $H_{0,\tau,0}$ is the whole real line. The same holds true when $m<0$ and $\tau=2$.} \label{fig:ls}
\end{center}
\end{figure}

\subsubsection{Purely magnetic interaction}
Let $\eta=\tau=0$. Then $d=-\lambda^2\neq 4$. If
\begin{equation*} 
\lambda k<0
\end{equation*}
then 
\begin{equation*}
z_\pm=\pm\sqrt{m^2+\Big(\frac{4-\lambda^2}{4+\lambda^2}\Big)^2 k^2}
\end{equation*}
are eigenvalues of $H[k]_{0,0,\lambda}$ in the gap of $\sigma_\textup{ess}(H[k]_{0,0,\lambda})$. If  $\lambda k\geq 0$ then there are no eigenvalues. Again, these results remain valid also in the decoupled case $\lambda=\pm 2$, as the case $\eta=\tau=0$ and $\lambda=\pm 2$ corresponds to $z=\pm 1$ in \cite{GrLe_06}; then, one can read off the result from equations (22) and (23) in \cite{GrLe_06}. In all cases we have $\sigma(H_{0,0,\lambda})=\sigma(\hat H)=(-\infty,-|m|]\cup[|m|,+\infty)$.  Therefore, it is not possible to reduce the gap in the spectrum of $\hat H$ by means of the magnetic $\delta$-interaction. For $m=0$ and $\lambda k<0$, there are linear bands
$$z_\pm=\pm\frac{|4-\lambda^2|}{4+\lambda^2} |k|.$$
The energy bands for various values of $\lambda$ and $m$ are shown in Figure \ref{fig:magnetic}. Since $\sgn{\lambda}$ determines the orientation of the magnetic field and 
$$k\frac{\dd z_+}{\dd k}>0,\quad  k\frac{\dd z_-}{\dd k}<0,$$
the condition $\lambda k<0$ may be physically interpreted as follows. The "positronic and electronic" low dispersing wave-packets, i.e., packets constructed using the eigenfunctions associated with $z_+$ and $z_-$, respectively, may propagate along the axis $x=0$ in one direction only. This direction is determined by the Lorentz force.

\begin{figure}[h]
\begin{center}
   \includegraphics[width=\textwidth]{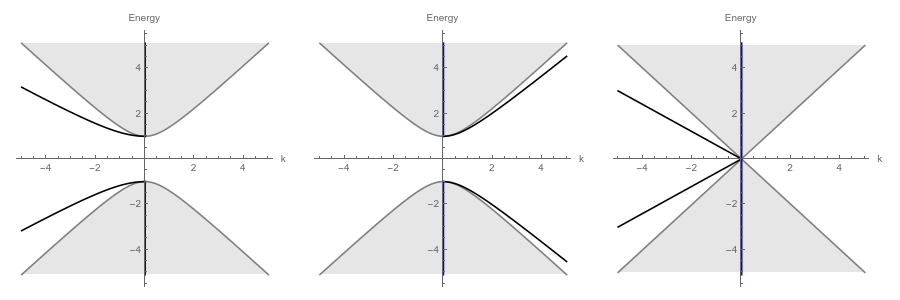}  
   \caption{Purely magnetic  case with  $m=1$ and $\lambda=1$ (left), $m=1$ and $\lambda=-0.5$ (middle), $m=0$ and  $\lambda=1$ (right). The bulk spectrum, the energy bands, and the spectrum of the full operator $H_{0,0,\lambda}$ are depicted in gray, black, and blue, respectively. Note that the bands are supported on an open semi-axis only: for $\lambda>0$ it is the negative semi-axis and vice versa. If $m=0$ there is a linear band. Note that the spectrum of $H_{0,0,\lambda}$ always coincides with the spectrum of the free operator.}
\label{fig:magnetic}  
\end{center}
\end{figure}

\subsubsection{Spectral transition for $d=4$}

As it was noted in Section~\ref{section_electrostatic} there are some cases where a continuous change of the coupling parameters may cause a dramatic change of the spectral properties -- a so-called spectral transition; cf. \cite{BeHoTu_21}. Here we want to discuss another remarkable configuration causing a spectral transition, namely, when $d=4$. By Theorem~\ref{theo:full_spectrum}, if $d=4$ and $\lambda \neq 0$, then one always has
\begin{equation*}
  \sigma(H_{\eta, \tau, \lambda}) = \sigma_{\textup{ac}}(H_{\eta, \tau, \lambda}) = \mathbb{R} \quad \text{and} \quad 
  \sigma_{\textup{p}}(H_{\eta, \tau, \lambda}) = \sigma_{\textup{sc}}(H_{\eta, \tau, \lambda}) = \emptyset.
\end{equation*}
If $\lambda$ changes its value to zero, while keeping $d=4$ fixed, then
\begin{equation*}
  \sigma_{\textup{ac}}(H_{\eta, \tau, \lambda}) = (-\infty, -|m|] \cup [|m|, +\infty), \,\, \sigma_{\textup{p}}(H_{\eta, \tau, \lambda}) = -\frac{\tau}{\eta}m, \,\, \text{and} \,\, 
  \sigma_{\textup{sc}}(H_{\eta, \tau, \lambda}) = \emptyset.
\end{equation*}
We see that for $\lambda=0$ an eigenvalue of infinite multiplicity appears. This change is even more dramatic for $m \neq 0$, as then the absolutely continuous spectrum in $(-|m|,|m|)$ collapses to the single point $-\frac{\tau}{\eta}m$.

\section{Approximations by regular potentials} \label{sec:approx}

The problem of finding regular approximations when $\omega\neq 0$ may be reduced to the case when $\omega=0$ in a similar vein as described in \cite[Section 8]{CaLoMaTu_21}. Therefore, we still consider only the case $\omega=0$ in this section.

\subsection{Approximations of the fiber operators} \label{sec:1Dapprox}
First, one can try to approximate $H[k]_{\eta,\tau,\lambda}$ by the one-dimensional Dirac operator with a scaled regular potential which converges to the $\delta$-distribution in the sense of distributions.  This has been done before when $k=0$. The first rigorous results are due to \v{S}eba \cite{Se_89}, where approximations in the norm resolvent sense were provided for the purely electrostatic and the purely Lorentz scalar $\delta$-interaction. Approximations of a general $\delta$-interaction were found in \cite{Hu_97} and \cite{Hu_99}; however, only in the strong resolvent sense. The norm resolvent convergence of these approximations was proved in \cite{Tu_20} for a three-parametric family of interactions. In fact, one can show norm resolvent convergence of the approximations  for the whole four-parametric family \cite{Ru_21}. Since norm resolvent convergence is stable with respect to adding a constant hermitian perturbation, 
these results extend to the case $k\neq 0$. 

We will now describe the approximating operators. Let $h\in L^1(\R;\R)\cap L^\infty(\R;\R)$ be such that $\int_\R h(x)\dd x=1$ and for $\varepsilon>0$ define
\begin{equation*}
h_\varepsilon(x):=\frac{1}{\varepsilon}h\Big(\frac{x}{\varepsilon}\Big).
\end{equation*}
In the following, we assume $d>-4$; the way how we treat the case $d<-4$ is discussed after Theorem~\ref{theo:approx1}.
Since $\Lambda$ in the transmission condition \eqref{eq:TClambda} is invertible,  there exists a hermitian matrix $A$ such that 
\begin{equation} \label{exponential_matrix}
\exp(-i\sigma_1 A)=\Lambda.
\end{equation} 
To see this we make a similar calculation as in \cite{CaLoMaTu_21} and \cite{Tu_20},  rewrite
$$\Lambda=\frac{4}{4+d}\Big(\frac{4-d}{4}\sigma_0+\lambda\sigma_3-i\eta\sigma_1-\tau\sigma_2\Big),$$
and use the ansatz
$$A\equiv A_{\tilde\eta,\tilde\tau,\tilde\lambda}=\tilde\eta\sigma_0+\tilde\tau\sigma_3+\tilde\lambda\sigma_2
$$
with
\begin{equation} \label{eq:renorm}
(\tilde\eta,\tilde\tau,\tilde\lambda)=\begin{cases}
\frac{2}{\sqrt{d}}\big(\arctan(\sqrt{d}/2)+l\pi\big)(\eta,\tau,\lambda)\, \text{with }l\in\Z & \text{ for }d>0\\
(\eta,\tau,\lambda) & \text{ for }d=0\\
\frac{2}{\sqrt{-d}}\arctanh(\sqrt{-d}/2)(\eta,\tau,\lambda) & \text{ for }d\in(-4,0).
\end{cases}
\end{equation}
Plugging this into 
\begin{equation*}
  \exp(B) = \exp\Big( \frac{\tr B}{2} \Big) \Big( \cos \nu \sigma_0 + \frac{\sin \nu}{\nu} \Big( B -  \frac{\tr B}{2} \sigma_0 \Big) \Big), \quad \nu := \sqrt{\det B - \Big( \frac{\tr B}{2} \Big)^2},
\end{equation*}
which holds for all 2$\times$2 matrices $B$, we find that~\eqref{exponential_matrix} is fulfilled.
Finally, if we  put
\begin{equation*}
 \begin{split}
  H[k]^\varepsilon_{\tilde\eta,\tilde\tau,\tilde\lambda} &=H[k]+A_{\tilde\eta,\tilde\tau,\tilde\lambda} h_\varepsilon  =\sigma_1(-i\partial_x)+\sigma_2 k+\sigma_3 m+A_{\tilde\eta,\tilde\tau,\tilde\lambda} h_\varepsilon ,\\
  \dom(H[k]^\varepsilon_{\tilde\eta,\tilde\tau,\tilde\lambda})&=\dom(H[k])=H^1(\R;\C^2),
 \end{split}
\end{equation*}
then the following result follows from \cite{Ru_21}, as the bounded perturbation $\sigma_2 k$ does not influence the convergence.

\begin{theorem} \label{theo:approx1}
Let $\eta,\tau,\lambda\in\R$ be such that $d=\eta^2-\tau^2-\lambda^2>-4$
and let $\tilde\eta,\tilde\tau,\tilde\lambda\in\R$ be as in \eqref{eq:renorm}.
Then
\begin{equation*}
\lim_{\varepsilon\to 0}\|(H[k]_{\eta,\tau,\lambda}-z)^{-1}-(H[k]^\varepsilon_{\tilde\eta,\tilde\tau,\tilde\lambda}-z)^{-1}\|=0,\quad z\in\C\setminus\R.
\end{equation*}
\end{theorem}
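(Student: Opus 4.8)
The plan is to reduce Theorem~\ref{theo:approx1} to the known one-dimensional result in \cite{Ru_21}, which treats exactly the case $k=0$, and to handle the extra term $\sigma_2 k$ by an abstract perturbation argument. First I would invoke \cite{Ru_21} to obtain that the operators $H[0] + A_{\tilde\eta,\tilde\tau,\tilde\lambda} h_\varepsilon$ converge in the norm resolvent sense, as $\varepsilon \to 0$, to the point-interaction operator $H[0]_{\eta,\tau,\lambda}$; here it is essential that $A_{\tilde\eta,\tilde\tau,\tilde\lambda}$ is precisely the matrix for which $\exp(-i\sigma_1 A_{\tilde\eta,\tilde\tau,\tilde\lambda}) = \Lambda$, i.e.\ the renormalization \eqref{eq:renorm} is built in, and the identity \eqref{exponential_matrix} together with the explicit $2\times 2$ exponential formula guarantees this. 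One should double-check that the hypotheses on $h$ assumed in \cite{Ru_21} are met by our $h \in L^1(\R;\R) \cap L^\infty(\R;\R)$ with $\int_\R h = 1$, and that the parameter regime $d > -4$ is the one covered there (the case $d < -4$ being reduced to $d > -4$ via Proposition~\ref{prop:UnitTrafo}, as the text indicates is done after the theorem).

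The second step is to add back the momentum term. Both $H[k]^\varepsilon_{\tilde\eta,\tilde\tau,\tilde\lambda} = H[0]^\varepsilon_{\tilde\eta,\tilde\tau,\tilde\lambda} + \sigma_2 k$ and $H[k]_{\eta,\tau,\lambda} = H[0]_{\eta,\tau,\lambda} + \sigma_2 k$ arise from their $k=0$ counterparts by adding the same bounded self-adjoint operator $\sigma_2 k$ (with $\|\sigma_2 k\| = |k|$), acting on the unchanged domains $H^1(\R;\C^2)$ and $H^1(\R_-;\C^2)\oplus H^1(\R_+;\C^2)$ respectively. The key point is that norm resolvent convergence is stable under addition of a fixed bounded self-adjoint perturbation. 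Concretely, for a bounded self-adjoint $V$ and self-adjoint operators $S_\varepsilon, S$ one has the second resolvent identity
\begin{equation*}
(S_\varepsilon + V - z)^{-1} - (S_\varepsilon - z)^{-1} = -(S_\varepsilon + V - z)^{-1} V (S_\varepsilon - z)^{-1},
\end{equation*}
and similarly for $S$; writing $(S_\varepsilon + V - z)^{-1} - (S + V - z)^{-1}$ as a sum of terms each of which either contains the factor $(S_\varepsilon - z)^{-1} - (S - z)^{-1}$ or is multiplied by it, and using the uniform bounds $\|(S_\varepsilon + V - z)^{-1}\| \le 1/\mathrm{dist}(z,\R) \cdot$ (something depending only on $\|V\|$, for $z$ in a fixed compact subset of $\C\setminus\R$) — or more cleanly, $\|(S_\varepsilon + V - z)^{-1}\| \le 1/|\mathrm{Im}\,z|$ since $S_\varepsilon + V$ is self-adjoint — one concludes $\|(S_\varepsilon + V - z)^{-1} - (S + V - z)^{-1}\| \to 0$.

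Applying this with $S_\varepsilon = H[0]^\varepsilon_{\tilde\eta,\tilde\tau,\tilde\lambda}$, $S = H[0]_{\eta,\tau,\lambda}$, and $V = \sigma_2 k$ yields
\begin{equation*}
\lim_{\varepsilon\to 0}\bigl\|(H[k]_{\eta,\tau,\lambda}-z)^{-1}-(H[k]^\varepsilon_{\tilde\eta,\tilde\tau,\tilde\lambda}-z)^{-1}\bigr\|=0
\end{equation*}
for every $z \in \C\setminus\R$, which is the assertion. The only genuinely nontrivial ingredient is the one-dimensional norm resolvent convergence result of \cite{Ru_21}, which is cited, not reproved; the remaining work — verifying that \eqref{eq:renorm} indeed produces the matrix $A$ with $\exp(-i\sigma_1 A) = \Lambda$, and the bounded-perturbation stability — is routine. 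The main point to be careful about is therefore a bookkeeping one: ensuring that the branch/parameter choices in \eqref{eq:renorm} (the integer $l\in\Z$ for $d>0$) and the sign conventions match those under which \cite{Ru_21} establishes convergence to precisely $H[k]_{\eta,\tau,\lambda}$ rather than to some other self-adjoint realization.
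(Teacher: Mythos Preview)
Your proposal is correct and matches the paper's approach exactly: the paper simply states that the result follows from \cite{Ru_21} ``as the bounded perturbation $\sigma_2 k$ does not influence the convergence,'' and you have spelled out precisely this reduction together with the standard stability argument via the resolvent identity. The additional care you take in checking that \eqref{eq:renorm} yields \eqref{exponential_matrix} is already handled in the text preceding the theorem, so nothing further is needed.
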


Note that it is much more difficult to deal with the decoupled case when $d=-4$; cf. \cite{Ru_21}. If $d\in(-\infty,-4)$ one can still use the theorem, after employing the unitary transformation from Proposition \ref{prop:UnitTrafo}, see the discussion below the proof of Theorem~2.5 in \cite{CaLoMaTu_21} for details. In fact, the restriction $d>-4$ is the price we pay for our ansatz for $A$ (and consequently, for the concise formula  \eqref{eq:renorm}).

For the distributional limit of the approximating potential we get
$$\lim_{\varepsilon\to 0}A_{\tilde\eta,\tilde\tau,\tilde\lambda}h_\varepsilon=A_{\tilde\eta,\tilde\tau,\tilde\lambda}\delta=(\tilde\eta\sigma_0+\tilde\tau\sigma_3+\tilde\lambda\sigma_2)\delta.$$ 
Except for the case when $d=0$ this is different from the potential in the formal expression \eqref{eq:formalOp} for $H[k]_{\eta,\tau,\lambda}$. Therefore, the renormalization of the coupling constant occurs during the approximating procedure. 
In the one-dimensional relativistic setting, this effect was described for the first time in \cite{Se_89} and generalized and studied in more detail in \cite{Hu_97,Hu_99,Tu_20}. Remarkably, the coupling constants have to be renormalized in the exactly same manner in the two-dimensional \cite{CaLoMaTu_21} and the three-dimensional \cite{MaPi_17,MaPi_18} setting.

\subsection{Approximations of the full operator}  \label{sec:approx_full}
Let us define
\begin{equation*}
H^\varepsilon_{\tilde\eta,\tilde\tau,\tilde\lambda}:=\int_\R^\oplus H[k]^\varepsilon_{\tilde\eta,\tilde\tau,\tilde\lambda}\dd k.
\end{equation*}
By \cite[Theorem XIII.85]{RS4} we have for $z\in\C\setminus\R$
\begin{align*} 
&(H^\varepsilon_{\tilde\eta,\tilde\tau,\tilde\lambda}-z)^{-1}=\int_\R^\oplus (H[k]^\varepsilon_{\tilde\eta,\tilde\tau,\tilde\lambda}-z)^{-1}\dd k,\\
&(H_{\eta,\tau,\lambda}-z)^{-1}=\int_\R^\oplus (H[k]_{\eta,\tau,\lambda}-z)^{-1}\dd k.
\end{align*}
We will prove the following result:
\begin{proposition} \label{proposition_approximation}
Let $\eta,\tau,\lambda\in\R$ be such that $d=\eta^2-\tau^2-\lambda^2>-4$. Then, for any $z\in\C\setminus\R$ and all $\psi\in \int_\R^\oplus L^2(\R,\dd x;\C^2)\dd k\equiv L^2(\R^2,\dd x\dd k;\C^2)$,
\begin{equation*}
\lim_{\varepsilon\to 0}\|(H_{\eta,\tau,\lambda}-z)^{-1}\psi-(H^\varepsilon_{\tilde\eta,\tilde\tau,\tilde\lambda}-z)^{-1}\psi\|=0,
\end{equation*}
where the coefficients $\tilde\eta,\tilde\tau,$ and $\tilde\lambda$ are given in \eqref{eq:renorm}.
\end{proposition}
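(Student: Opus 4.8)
The plan is to deduce the strong resolvent convergence of $H^\varepsilon_{\tilde\eta,\tilde\tau,\tilde\lambda}$ to $H_{\eta,\tau,\lambda}$ directly from the fiberwise norm resolvent convergence in Theorem~\ref{theo:approx1} together with the direct integral representations of the resolvents recalled just before the statement. First I would fix $z\in\C\setminus\R$ and $\psi\in\int_\R^\oplus L^2(\R,\dd x;\C^2)\dd k$. Using the two displayed identities for $(H^\varepsilon_{\tilde\eta,\tilde\tau,\tilde\lambda}-z)^{-1}$ and $(H_{\eta,\tau,\lambda}-z)^{-1}$ as direct integrals of the fiber resolvents, one writes
\begin{equation*}
\|(H_{\eta,\tau,\lambda}-z)^{-1}\psi-(H^\varepsilon_{\tilde\eta,\tilde\tau,\tilde\lambda}-z)^{-1}\psi\|^2=\int_\R \big\|\big((H[k]_{\eta,\tau,\lambda}-z)^{-1}-(H[k]^\varepsilon_{\tilde\eta,\tilde\tau,\tilde\lambda}-z)^{-1}\big)\psi(\cdot,k)\big\|^2_{L^2(\R;\C^2)}\,\dd k.
\end{equation*}
Theorem~\ref{theo:approx1} gives, for each fixed $k\in\R$, that the integrand tends to $0$ as $\varepsilon\to 0$ (in fact even the operator norm of the bracketed difference tends to $0$, so a fortiori its action on $\psi(\cdot,k)$ does).

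The remaining point is to pass the limit under the integral sign, and the natural tool is the dominated convergence theorem. For the domination I would use the uniform resolvent bound: every $H[k]_{\eta,\tau,\lambda}$ and every $H[k]^\varepsilon_{\tilde\eta,\tilde\tau,\tilde\lambda}$ is self-adjoint, hence $\|(H[k]_{\eta,\tau,\lambda}-z)^{-1}\|\le |\operatorname{Im}z|^{-1}$ and likewise for the approximants, uniformly in $k$ and $\varepsilon$. Consequently the integrand is bounded by $\big(2/|\operatorname{Im}z|\big)^2\|\psi(\cdot,k)\|^2_{L^2(\R;\C^2)}$, which is integrable in $k$ since $\psi\in L^2(\R^2;\C^2)$. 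Dominated convergence then yields that the integral, and hence the norm on the left-hand side, converges to $0$, which is exactly the claim. One also has to note measurability in $k$ of the integrand for each fixed $\varepsilon$, but this is immediate from the measurability of the fiber families $k\mapsto H[k]_{\eta,\tau,\lambda}$ and $k\mapsto H[k]^\varepsilon_{\tilde\eta,\tilde\tau,\tilde\lambda}$ already used to define the corresponding direct integrals.

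I do not expect a serious obstacle here; the proof is essentially a bookkeeping exercise combining Theorem~\ref{theo:approx1} with \cite[Theorem~XIII.85]{RS4}. The only mild subtlety is that Theorem~\ref{theo:approx1} provides fiberwise \emph{norm} convergence but the bound used for domination is the trivial self-adjoint resolvent bound, which is independent of $\varepsilon$; one should be slightly careful to state that the constant $2/|\operatorname{Im}z|$ does not depend on $k$ or on $\varepsilon$ so that the dominating function is a genuine fixed $L^1(\R)$ function. Note also that the hypothesis $d>-4$ is inherited directly from Theorem~\ref{theo:approx1}, and the case $d<-4$ can be reduced to it via the unitary equivalence of Proposition~\ref{prop:UnitTrafo}, exactly as remarked after Theorem~\ref{theo:approx1}; I would add a sentence to that effect rather than redo the argument.
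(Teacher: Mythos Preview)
Your argument is correct and, in fact, cleaner than the one in the paper. The paper proceeds differently: it first derives a resolvent identity that expresses $(H[k]^\varepsilon_{\tilde\eta,\tilde\tau,\tilde\lambda}-i)^{-1}-(H[k]_{\eta,\tau,\lambda}-i)^{-1}$ in terms of the same difference at $k=0$, obtaining the estimate
\[
\big\|(H[k]^\varepsilon_{\tilde\eta,\tilde\tau,\tilde\lambda}-i)^{-1}-(H[k]_{\eta,\tau,\lambda}-i)^{-1}\big\|\le (1+|k|)^2\,\big\|(H[0]^\varepsilon_{\tilde\eta,\tilde\tau,\tilde\lambda}-i)^{-1}-(H[0]_{\eta,\tau,\lambda}-i)^{-1}\big\|,
\]
and then uses a truncation $\psi_K$ to $|k|\le K$ together with a triangle-inequality argument and the crude uniform bound $\|\int^\oplus D_\varepsilon[k]\,\dd k\|\le 2$ to conclude. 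Your approach bypasses the resolvent identity and the truncation entirely: the trivial self-adjoint bound $\|(H[k]^{(\varepsilon)}-z)^{-1}\|\le |\mathrm{Im}\,z|^{-1}$ already furnishes the $k$- and $\varepsilon$-independent dominating function $(2/|\mathrm{Im}\,z|)^2\|\psi(\cdot,k)\|_{L^2(\R;\C^2)}^2$, so dominated convergence applies directly. What the paper's estimate buys is an explicit $k$-uniform rate on compact sets in $k$ (everything is controlled by the single quantity $\|D_\varepsilon[0]\|$), but this is not needed for the stated strong resolvent convergence; your route is the more economical proof of the proposition as stated.
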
  

We would like to point out that Proposition~\ref{proposition_approximation} also holds true for the critical interaction strengths, i.e., when $(\frac{d}{4}-1)^2-\lambda^2=0$. It is the first time, that such an approximation result is shown in the critical case in a higher dimensional setting.

\begin{remark}[Purely magnetic interaction] If $\eta=\tau=0$ and $\lambda\in(-2,0)\cup(0,2)$ then the matrix part of the approximating potential is just a multiple of $\sigma_2$,
$$A=\tilde{\lambda}\sigma_2=2\arctanh\frac{\lambda}{2}\sigma_2.$$
(If $|\lambda|>2$ we would get $A=2\arctanh\frac{2}{\lambda}\sigma_2+\pi\sigma_1$.) Therefore, we have
$$\mathscr{F}_{y\to k}^{-1}H_{0,0,\tilde{\lambda}}^\varepsilon\mathscr{F}_{y\to k}=\sigma_1(-i\partial_x)+\sigma_2(-i\partial_y+\tilde\lambda h_\varepsilon)+\sigma_3 m.$$
This is just the two-dimensional Dirac Hamiltonian with a magnetic field supported on the $\varepsilon$-tubular neighbourhood of the line $x=0$. When $m=0$ such operators were used to describe the electron states (near one Dirac point) in graphene under the influence of either a perpendicular magnetic field or a strain \cite{FoGuKa_08,MaVaPe_09,PeNe_09}. Note that the function $h_\varepsilon$ determines the profile of the "magnetic barrier". If $h_\varepsilon=\varepsilon^{-1}\chi_{(-\varepsilon,\varepsilon)}$ then we get a model which is analytically solvable--see \cite[Section 2.1]{MaVaPe_09}. The narrow limit, $\varepsilon\to 0$, was treated only formally in \cite{PeNe_09}, and so the renormalization of the coupling constant was not derived there. Nevertheless, unless $\lambda=0$, it is always necessary.  On the other hand, linear bands for the what we call here purely magnetic $\delta$-interaction were already observed in \cite{PeNe_09}.
\end{remark}

\begin{proof}[Proof of Proposition~\ref{proposition_approximation}]
  First, we note that for any two closed operators $A, B$ acting in a Hilbert space and $z_1,z_2 \in \rho(A) \cap \rho(B)$ the relation
  \begin{equation*}
   \begin{split}
    (&A-z_2)^{-1} - (B-z_2)^{-1} \\
    &= \big( 1+(z_2-z_1) (B-z_2)^{-1} \big) \big[ (A-z_1)^{-1}-(B-z_1)^{-1} \big] \big( 1+(z_2-z_1) (A-z_2)^{-1} \big)
   \end{split}
  \end{equation*}
  holds; this can be seen by employing the first resolvent identity. Applying this for $A = (H[0]^\varepsilon_{\tilde\eta,\tilde\tau,\tilde\lambda}-i) \sigma_2$, $B=(H[0]_{\eta,\tau,\lambda}-i)\sigma_2$, $z_2=-k$, and $z_1=0$, we obtain
  \begin{multline*} 
      (H[k]^\varepsilon_{\tilde\eta,\tilde\tau,\tilde\lambda}-i)^{-1} - (H[k]_{\eta,\tau,\lambda}-i)^{-1} \\
      = \sigma_2 \Big(((H[0]^\varepsilon_{\tilde\eta,\tilde\tau,\tilde\lambda}-i)\sigma_2 + k)^{-1} - ((H[0]_{\eta,\tau,\lambda}-i)\sigma_2 + k)^{-1} \Big) \\
      = \sigma_2 \big( 1-k \big((H[0]_{\eta,\tau,\lambda}-i)\sigma_2+k\big)^{-1} \big) \big[ \sigma_2(H[0]^\varepsilon_{\tilde\eta,\tilde\tau,\tilde\lambda}-i)^{-1}-\sigma_2(H[0]_{\eta,\tau,\lambda}-i)^{-1} \big] \\
      \big( 1-k \big((H[0]^\varepsilon_{\tilde\eta,\tilde\tau,\tilde\lambda}-i) \sigma_2+k\big)^{-1} \big) \\
      = \big( \sigma_2-k (H[k]_{\eta,\tau,\lambda}-i)^{-1} \big) \sigma_2 \big[ (H[0]^\varepsilon_{\tilde\eta,\tilde\tau,\tilde\lambda}-i)^{-1}-(H[0]_{\eta,\tau,\lambda}-i)^{-1} \big] \sigma_2\\
      \big( \sigma_2-k (H[k]^\varepsilon_{\tilde\eta,\tilde\tau,\tilde\lambda}-i)^{-1} \big).
  \end{multline*}
 Since $\sigma_2$ is unitary and  $\|(A-i)^{-1}\|\leq 1$ holds for any self-adjoint operator $A$,   this implies
  \begin{equation} \label{equation_norm}
    \begin{split}
      \big\| (H[k]^\varepsilon_{\tilde\eta,\tilde\tau,\tilde\lambda}-i)^{-1} &- (H[k]_{\eta,\tau,\lambda}-i)^{-1} \big\| \\
      &\leq ( 1+|k| )^2 \big\| (H[0]^\varepsilon_{\tilde\eta,\tilde\tau,\tilde\lambda}-i)^{-1}-(H[0]_{\eta,\tau,\lambda}-i)^{-1} \big\|.
    \end{split}
  \end{equation}
For $K>0$, define $\mathscr{H}_K:=\{\psi \in L^2(\mathbb{R}^2; \mathbb{C}^2)|\, \psi(\cdot,k)=0 \text{ if }|k|>K\}$. 
First, take any $\psi\in\mathscr{H}_K$ and put $D_\varepsilon[k]:=(H[k]^\varepsilon_{\tilde\eta,\tilde\tau,\tilde\lambda}-i)^{-1} - (H[k]_{\eta,\tau,\lambda}-i)^{-1}$. Then we have
\begin{multline*}
\left\|\int_\R^\oplus D_\varepsilon[k]\dd k ~\psi\right\|^2=\left\|\int_{(-K,K)}^\oplus D_\varepsilon[k]\psi(\cdot,k)\dd k\right\|^2
=\int_{(-K,K)}\|D_\varepsilon[k]\psi(\cdot,k)\|^2\dd k\\
\leq \sup_{(-K,K)}\|D_\varepsilon[k]\|^2\int_{(-K,K)}\|\psi(\cdot,k)\|^2\dd k
=\sup_{(-K,K)}\|D_\varepsilon[k]\|^2\|\psi\|^2.
\end{multline*}
Using \eqref{equation_norm} we obtain the estimate 
\begin{equation} \label{eq:D_bound}
\left\|\int_\R^\oplus D_\varepsilon[k]\dd k ~\psi\right\|\leq (1+K)^2\|D_\varepsilon[0]\|\|\psi\|.
\end{equation}

Next, for a fixed $\psi=\int_\R^\oplus\psi(\cdot,k)\dd k\in L^2(\mathbb{R}^2; \mathbb{C}^2)$ and $K>0$ define the function $\psi_K:=\int_{(-K,K)}\psi(\cdot,k)\dd k\in\mathscr{H}_K$. Using the dominated convergence theorem, one can show that 
\begin{equation*}
\lim_{K\to\infty}\|\psi-\psi_K\|=0.
\end{equation*}
Finally, by the triangle inequality, \eqref{eq:D_bound}, and the fact that $\|\psi_K\|\leq\|\psi\|$, we get
\begin{multline} \label{eq:telescopic}
\left\|\int_\R^\oplus D_\varepsilon[k]\dd k ~\psi\right\|
\leq \left\|\int_\R^\oplus D_\varepsilon[k]\dd k ~\psi_K\right\|+\left\|\int_\R^\oplus D_\varepsilon[k]\dd k ~(\psi-\psi_K)\right\|\\
\leq (1+K)^2\|D_\varepsilon[0]\|\|\psi\|+\left\|\int_\R^\oplus D_\varepsilon[k]\dd k\right\| ~\|(\psi-\psi_K)\big\|.
\end{multline}
Since
$$\left\|\int_\R^\oplus D_\varepsilon[k]\dd k\right\|\leq\left\|\int_\R^\oplus (H[k]^\varepsilon_{\tilde\eta,\tilde\tau,\tilde\lambda}-i)^{-1}\dd k\right\|+\left\|\int_\R^\oplus (H[k]_{\eta,\tau,\lambda}-i)^{-1}\dd k\right\|\leq 2,$$
the second term on the right-hand side of \eqref{eq:telescopic} can be made arbitrarily small by choosing $K$ large enough.
With any fixed $K$ the first term on the right-hand side of \eqref{eq:telescopic} tends to zero as $\varepsilon \to 0$. We conclude that 
$\lim_{\varepsilon\to 0}\|\int_\R^\oplus D_\varepsilon[k]\dd k ~\psi\|=0$.
\end{proof}

\subsection*{Acknowledgements}
J. Behrndt and M. Holzmann gratefully acknowledge financial support by the Austrian Science Fund (FWF): P 33568-N. M.~Tu\v{s}ek was partially supported by the grant No.~21-07129S of the Czech Science Foundation (GA\v{C}R) and by the project CZ.02.1.01/0.0/0.0/16\_019/0000778 from the European Regional Development Fund. This publication is based upon work from COST Action CA 18232 MAT-DYN-NET, supported by COST (European Cooperation in Science and Technology), www.cost.eu. The authors wish to express their thanks to V\'{i}t Jakubsk\'{y} for pointing out several useful references.

\appendix
\section{Direct integrals of self-adjoint operators} \label{section_appendix}

In this section we will prove several abstract results concerning the direct integral of self-adjoint operators. 
First, following \cite[Section~XIII.16]{RS4} we recall some necessary notations.
Let $\mathscr{G}$ be a separable complex Hilbert space with inner product $\langle\cdot, \cdot \rangle_\mathscr{G}$ and, for simplicity, let $\mathcal{M} \subset \mathbb{R}$ be an interval. Moreover, let $A(k)$, $k \in \mathcal{M}$, be a family of self-adjoint operators in $\mathscr{G}$ such that $\mathcal{M} \ni k \mapsto \langle(A(k)-i)^{-1}\psi,\varphi\rangle_\mathscr{G}$ is measurable for all $\psi,\varphi \in \mathscr{G}$, and define in the Hilbert space
\begin{align*}
 \mathscr{H} & := \int_\mathcal{M}^\oplus \mathscr{G}~ \textup{d} k\\
 &  := \left\{ \psi: \mathcal{M} \rightarrow \mathscr{G}|\, k \mapsto \|\psi(k)\|_{\mathscr{G}} \text{ is measurable, } \|\psi\|^2_{\mathscr{H}}:=\int_\mathcal{M} \| \psi(k) \|_{\mathscr{G}}^2~ \textup{d} k < \infty \right\}
\end{align*}
the operator 
\begin{equation} \label{direct_integral}
  A := \int_\mathcal{M}^\oplus A(k) \textup{d} k
\end{equation}
by
\begin{equation*}
  \begin{split}
  (A\psi)(k) &= A(k) \psi(k),\\
    \dom (A) &= \left\{ \psi \in \mathscr{H}: \psi(k) \in 
    \dom (A(k)) \text{ a.e., } \int_\mathcal{M} \|A(k) \psi(k)\|_{\mathscr{G}}^2~\dd k<\infty \right\}.
  \end{split}
\end{equation*}
It is well-known that $A$ is self-adjoint in $\mathscr{H}$ and that
\begin{equation} \label{spectrum_direct_integral}
  \sigma(A) = \big\{ z \in \mathbb{R}|~\bigl|\{ k \in \mathcal{M}| \sigma(A(k)) \cap (z-\varepsilon, z+\varepsilon) \neq \emptyset\}\bigr| > 0\,\,\text{for all}\,\,\varepsilon>0  \big\}
\end{equation}
and
\begin{equation} \label{point_spectrum}
  \sigma_\textup{p}(A) = \big\{ z \in \mathbb{R}|~\bigl|\{ k \in \mathcal{M}| z \in \sigma_\textup{p}(A(k))\}\bigr| > 0 \big\},
\end{equation}
where $|B|$ denotes the Lebesgue measure of $B \subset \mathbb{R}$; cf. \cite[Theorem~XIII.85]{RS4}. Note that all eigenvalues of $A$ are of infinite multiplicity. Indeed, if $\lambda \in \sigma_\textup{p}(A)$, then for any normalized $\psi(k) \in \ker (A(k)-\lambda)$ and each function $\chi$ which is non-zero on $\{ k \in \mathcal{M}| z \in \sigma_\textup{p}(A(k))\}$, such that $k \mapsto \chi(k) \psi(k)$ is square integrable, the vector $k \mapsto \chi(k) \psi(k)$ belongs to $\ker(A-\lambda)$.

In the following theorem we provide a criterion implying that $\sigma(A)$ consists only of absolutely continuous spectrum and pure point spectrum. 
This criterion is used in Section~\ref{sec:2dOp} to analyze the spectrum of the operator $H_{\eta,\tau,\lambda}$. We assume for each $k \in \mathcal{M}$ that
\begin{equation*} 
\sigma_{\textup{sc}}(A(k))=\emptyset,
\end{equation*}
and that there exist at most countably many measurable sets $\mathcal{I}_n \subset \mathcal{M}$ and measurable  functions $E_n: \mathcal{I}_n \rightarrow \mathbb{R}$ such that for almost every $k \in \mathcal{M}$
\begin{equation*}
  \sigma_\textup{p}(A(k)) = \bigcup_{\{n|\, k \in \mathcal{I}_n \}} \{E_n(k)\}.
\end{equation*} 

The formulation of the following theorem has rather general assumptions on $E_n(k)$, afterwards in Corollary~\ref{corollary_ac_applicable} we discuss a special case which is easier applicable. 

\begin{theorem} \label{theorem_ac}
  Suppose that $A$ satisfies the above assumptions. Moreover, assume that for each $\mathcal{N} \subset \mathbb{R}$ with $|\mathcal{N}| = 0$ the relation $|E_n^{-1}(\mathcal{N} \setminus \sigma_\textup{p}(A))|=0$ holds for all $n$. Then
  \begin{equation*}
\sigma_\textup{sc}(A) = \emptyset.
  \end{equation*}
\end{theorem}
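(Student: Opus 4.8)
The plan is to pass to the spectral measures of individual vectors and to exploit that in a direct integral the spectral projections are themselves fibered. Recall from \cite[Section~XIII.16]{RS4} (see also \cite[Theorem~XIII.85]{RS4}) that $E^A(\Omega)=\int_\mathcal{M}^\oplus E^{A(k)}(\Omega)\,\dd k$ for every Borel set $\Omega\subseteq\mathbb{R}$. Hence, for each $\psi\in\mathscr{H}$,
\[
  \mu_\psi^A(\Omega):=\langle E^A(\Omega)\psi,\psi\rangle_\mathscr{H}=\int_\mathcal{M}\langle E^{A(k)}(\Omega)\psi(k),\psi(k)\rangle_\mathscr{G}\,\dd k=\int_\mathcal{M}\mu_{\psi(k)}^{A(k)}(\Omega)\,\dd k,
\]
where $\mu_{\psi(k)}^{A(k)}$ denotes the spectral measure of $\psi(k)$ with respect to $A(k)$ and the integrand is a measurable function of $k$. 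Since $\sigma_{\textup{sc}}(A)=\emptyset$ is equivalent to $\mathscr{H}_{\textup{sc}}(A)=\{0\}$, that is, to $\mu_\psi^A$ having no singular continuous part for every $\psi\in\mathscr{H}$, it suffices to establish the latter.

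I would first record that $\sigma_\textup{p}(A)$ is countable. For fixed $n$ the sets $E_n^{-1}(\{z\})\subseteq\mathcal{I}_n$, $z\in\mathbb{R}$, are pairwise disjoint and measurable, and since Lebesgue measure on $\mathcal{M}$ is $\sigma$-finite only countably many pairwise disjoint sets can have positive measure; thus $Z_n:=\{z\in\mathbb{R}|\,|E_n^{-1}(\{z\})|>0\}$ is countable. Using \eqref{point_spectrum} together with the assumed description $\sigma_\textup{p}(A(k))=\bigcup_{\{n|\,k\in\mathcal{I}_n\}}\{E_n(k)\}$, one checks that $z\in\sigma_\textup{p}(A)$ forces $|\bigcup_n(\mathcal{I}_n\cap E_n^{-1}(\{z\}))|>0$ and hence $z\in Z_n$ for some $n$; therefore $\sigma_\textup{p}(A)\subseteq\bigcup_n Z_n$ is countable, and any atomless Borel measure on $\mathbb{R}$ vanishes on $\sigma_\textup{p}(A)$.

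Next comes the main estimate: for every Borel set $\mathcal{N}\subseteq\mathbb{R}$ with $|\mathcal{N}|=0$ one has $\mu_\psi^A(\mathcal{N}\setminus\sigma_\textup{p}(A))=0$. For almost every $k$, $\sigma_{\textup{sc}}(A(k))=\emptyset$, so $\mu_{\psi(k)}^{A(k)}$ splits into an absolutely continuous part, which assigns $\mathcal{N}$ measure zero, and a pure point part, which is carried by the atoms of $A(k)$; by assumption these atoms lie in $\{E_n(k)|\,k\in\mathcal{I}_n\}$. Consequently $\mu_{\psi(k)}^{A(k)}(\mathcal{N}\setminus\sigma_\textup{p}(A))\le\|\psi(k)\|_\mathscr{G}^2$ for all $k$, and it vanishes whenever there is no $n$ with $k\in\mathcal{I}_n$ and $E_n(k)\in\mathcal{N}\setminus\sigma_\textup{p}(A)$, i.e.\ whenever $k\notin\mathcal{B}:=\bigcup_n\bigl(\mathcal{I}_n\cap E_n^{-1}(\mathcal{N}\setminus\sigma_\textup{p}(A))\bigr)$. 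By the hypothesis of the theorem $|E_n^{-1}(\mathcal{N}\setminus\sigma_\textup{p}(A))|=0$ for all $n$, so $|\mathcal{B}|=0$, and since $\|\psi(\cdot)\|_\mathscr{G}^2\in L^1(\mathcal{M})$ we obtain
\[
  \mu_\psi^A(\mathcal{N}\setminus\sigma_\textup{p}(A))=\int_\mathcal{B}\mu_{\psi(k)}^{A(k)}(\mathcal{N}\setminus\sigma_\textup{p}(A))\,\dd k\le\int_\mathcal{B}\|\psi(k)\|_\mathscr{G}^2\,\dd k=0.
\]

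Finally I would assemble the conclusion: let $\mu_\psi^{\textup{sc}}$ be the singular continuous part of $\mu_\psi^A$, carried by some Borel set $\mathcal{N}_0$ with $|\mathcal{N}_0|=0$ and having no atoms. Then, using the previous estimate and the countability of $\sigma_\textup{p}(A)$,
\[
  \mu_\psi^{\textup{sc}}(\mathbb{R})=\mu_\psi^{\textup{sc}}(\mathcal{N}_0\setminus\sigma_\textup{p}(A))+\mu_\psi^{\textup{sc}}(\mathcal{N}_0\cap\sigma_\textup{p}(A))\le\mu_\psi^A(\mathcal{N}_0\setminus\sigma_\textup{p}(A))+0=0,
\]
so $\mu_\psi^{\textup{sc}}=0$ for every $\psi\in\mathscr{H}$ and hence $\sigma_{\textup{sc}}(A)=\emptyset$. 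I expect the part needing the most care to be the rigorous justification of the fibered spectral-projection identity and of the measurability of the $k$-dependent quantities; conceptually, the crux is to see that a fiber eigenvalue branch $E_n$ can only feed absolutely continuous or pure point spectrum into $A$ (a linear branch on an interval of momenta pushes Lebesgue measure forward to Lebesgue measure, a constant branch produces an atom), and that the only way it could instead produce singular continuous spectrum — by compressing a positive-measure set of momenta onto a Lebesgue-null set of energies not already contained in $\sigma_\textup{p}(A)$ — is precisely what the assumption $|E_n^{-1}(\mathcal{N}\setminus\sigma_\textup{p}(A))|=0$ excludes.
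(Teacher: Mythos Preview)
Your argument is correct and follows essentially the same route as the paper: use the fibered spectral projections $E^A(\Omega)=\int_\mathcal{M}^\oplus E^{A(k)}(\Omega)\,\dd k$, split each fiber measure into its absolutely continuous and pure point parts, and use the hypothesis $|E_n^{-1}(\mathcal{N}\setminus\sigma_\textup{p}(A))|=0$ to kill the contribution from the eigenvalue branches on null sets. The only cosmetic difference is organizational: the paper starts with $\psi\in\mathscr{H}_\textup{c}$ so that $\langle E^A(\sigma_\textup{p}(A))\psi,\psi\rangle=0$ is immediate (using countability of $\sigma_\textup{p}(A)$, which there follows simply from separability of $\mathscr{H}$), whereas you work with arbitrary $\psi$ and strip off the pure point part at the end via the Lebesgue decomposition of $\mu_\psi^A$; both lead to the same core estimate.
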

\begin{proof}
  Let $\mathscr{H}_\textup{pp}$ be the pure point subspace and let $\mathscr{H}_\textup{c} = \mathscr{H}_\textup{pp}^\perp$ be the continuous subspace associated with $A$.  
  It suffices to verify that $\mathscr{H}_\textup{c}$ is the absolutely continuous subspace associated with $A$. To show this, denote by $E$ and $E(k)$ the spectral measures corresponding to $A$ and $A(k)$, respectively. Let $\psi \in \mathscr{H}_\textup{c}$ be fixed. We check that for $\mathcal{N}\subset \mathbb{R}$ with Lebesgue measure zero
   the relation
  \begin{equation}\label{showthis}
     \int_\mathcal{N} \textup{d} \langle E\psi,\psi\rangle_{\mathscr{H}}  =0
  \end{equation}
  holds. In fact, we have
  \begin{equation*}
  \begin{split}
    \int_\mathcal{N} \textup{d} \langle E\psi,\psi\rangle_{\mathscr{H}} &= \int_\mathcal{M} \int_\mathcal{N} \textup{d}\langle E(k)\psi(k),\psi(k)\rangle_\mathscr{G}~ \textup{d} k\\
    &= \int_\mathcal{M} \int_{\mathcal{N} \setminus \sigma_\textup{p}(A)} \textup{d}\langle E(k)\psi(k),\psi(k)\rangle_\mathscr{G} \textup{d} k;
  \end{split}
  \end{equation*}
  cf. the proof of \cite[Theorem~XIII.85]{RS4} for the first equality and the second equality holds as 
  $\psi \in \mathscr{H}_\textup{c} = \mathscr{H}_\textup{pp}^\perp$  and $\sigma_\textup{p}(A)$ is at most countable.  By assumption, we can decompose $E(k) = E(k)_\textup{ac} + \sum_{n} P_n(k)$, where the measure $\langle E(k)_\textup{ac}\psi(k),\psi(k)\rangle_\mathscr{G}$ is absolutely continuous with respect to the Lebesgue measure and $P_n(k)$ is the orthogonal projection onto $\ker(A(k)-E_n(k))$. Then we have for each fixed $k \in \mathcal{M}$
  \begin{equation*}
    \int_{\mathcal{N} \setminus \sigma_\textup{p}(A)} \textup{d}\langle E(k)_\textup{ac} \psi(k),\psi(k)\rangle_\mathscr{G}  =0
  \end{equation*}
  and
  \begin{equation*}
    \int_{\mathcal{N} \setminus \sigma_\textup{p}(A)} \textup{d}\langle P_n(k) \psi(k),\psi(k)\rangle_\mathscr{G} \leq \begin{cases} \| \psi(k)\|^2_{\mathscr{G}} & \text{ for }  E_n(k) \in \mathcal{N} \setminus \sigma_\textup{p}(A), \\ 0 & \text{ for } E_n(k) \notin \mathcal{N} \setminus \sigma_\textup{p}(A). \end{cases}
  \end{equation*}
Let us denote by $\chi_{\mathcal{N} \setminus \sigma_\textup{p}(A)}$ the  indicator function associated with $\mathcal{N} \setminus \sigma_\textup{p}(A)$. Then we conclude that 
  \begin{equation*}
    \begin{split}
      \int_\mathcal{N} \textup{d} \langle E\psi,\psi\rangle_\mathscr{H} &\leq \int_\mathcal{M} \sum_{n} \| \psi(k)\|^2_{\mathscr{G}} \chi_{\mathcal{N} \setminus \sigma_\textup{p}(A)} (E_n(k))~ \textup{d} k \\
      &= \sum_{n} \int_{E_n^{-1}(\mathcal{N} \setminus \sigma_\textup{p}(A))}\| \psi(k) \|^2_{\mathscr{G}} ~\textup{d} k = 0,
    \end{split}
  \end{equation*}
  since $E_n^{-1}(\mathcal{N} \setminus \sigma_\textup{p}(A))$ is a zero set by assumption. This shows \eqref{showthis} 
  and yields the claimed result.
\end{proof}

In the following example we show that the assumption $|E_n^{-1}(\mathcal{N \setminus \sigma_\textup{p}(A)})|=0$ for all zero sets $\mathcal{N}$ is needed to conclude 
$\sigma_\textup{sc}(A) = \emptyset$.

\begin{example}
  Let
  \begin{equation*}
    \mathfrak{C} := \left\{ \sum_{n=1}^\infty \frac{a_n}{3^n} \bigg| a_n \in \{ 0,2\}~ \forall n \in \mathbb{N} \right\}
  \end{equation*}
  be the Cantor set. Choose for any fixed $x \in [0,1]$ \textit{one fixed} sequence of coefficients $\{ a_n(x): a_n(x) \in \{ 0,1\}, n \in \mathbb{N} \}$ such that
  \begin{equation*}
    x = \sum_{n=1}^\infty \frac{a_n(x)}{2^n}
  \end{equation*}
  and consider the map $E_1: [0,1] \rightarrow \mathfrak{C}$ given by
  \begin{equation*}
    E_1 \left( x \right) := \sum_{n=1}^\infty \frac{2 a_n(x)}{3^n}.
  \end{equation*}
  Then $E_1$ is strictly monotonously increasing, and hence injective and measurable. Consider in $L^2([0,1]) = \int_{[0,1]}^\oplus \mathbb{C}\, \textup{d} k$ the multiplication operator
  \begin{equation*}
    A = \int_{[0,1]}^\oplus A(k) \textup{d}k, \quad A(k) w = E_1(k)w \text{ for } k \in [0,1]\text{ and } w\in\C.
  \end{equation*}
  Clearly, $\sigma_\textup{c}(A(k))=\emptyset$, $\sigma(A(k))=\sigma_\textup{p}(A(k))=\{ E_1(k) \}$, and $\sigma_\textup{p}(A)=\emptyset$ follows from \eqref{point_spectrum}. However, for the spectral measure $E_A$ associated with $A$, $\psi \equiv 1$, and the Lebesgue zero set $\mathcal{N} = \mathfrak{C}$ we have
  \begin{equation*}
    \big\langle E_A(\mathfrak{C}) \psi, \psi \big\rangle_\mathbb{C} = \int_0^1 \chi_{E_1^{-1}(\mathfrak{C})}(k) \textup{d} k = 1 \neq 0,
  \end{equation*}
  i.e., $\psi$ belongs to the singularly continuous subspace of $A$ and hence, $\sigma_\textup{sc}(A) \neq \emptyset$.
\end{example}

Finally, we deduce the following result about the spectrum of a self-adjoint operator defined as a direct integral. The formulation is particularly simple to apply in the main part of the paper.

\begin{corollary} \label{corollary_ac_applicable}
Let the operator $A$ be defined via a direct integral as in~\eqref{direct_integral} such that for each $k \in \mathcal{M}$ 
\begin{equation*} 
\sigma_{\textup{sc}}(A(k))=\emptyset,
\end{equation*}
and assume that there exist at most countable many open intervals $\mathcal{I}_n \subset \mathcal{M}$ and real analytic  functions $E_n: \mathcal{I}_n \rightarrow \mathbb{R}$ such that for almost every $k \in \mathcal{M}$
\begin{equation*}
  \sigma_\textup{p}(A(k)) = \bigcup_{\{n|\, k \in \mathcal{I}_n \}} \{E_n(k)\}.
\end{equation*} 
Then $\sigma_\textup{sc}(A) = \emptyset$ 
and
the following assertions hold:
  \begin{itemize}
    \item[$\textup{(i)}$] $\sigma_\textup{p}(A) =  \bigcup_{n \in \mathcal{C}} \ran \, E_n$, where $\mathcal{C} = \{n|\,E_n\text{ is constant}\}$;
    \item[$\textup{(ii)}$] $\sigma(A)$ is given by 
    \begin{equation*}
    \overline{\bigcup_{n} \ran \, E_n}  \cup \bigl\{ z\in\mathbb R |\,\bigl|\{ k\in\mathcal M | \sigma_\textup{ac}(A(k)) \cap (z-\varepsilon, z+\varepsilon) \neq \emptyset\}\bigr| > 0\,\,\text{for all}\,\,\varepsilon>0  \bigr\};
    \end{equation*}
    \item[$\textup{(iii)}$] $\sigma_\textup{ac}(A)$ is given by 
    $$\overline{\bigcup_{n \notin \mathcal{C}} \ran \, E_n} \cup \bigl\{ z\in\mathbb R |\,\bigl|\{ k\in\mathcal M | \sigma_\textup{ac}(A(k)) \cap (z-\varepsilon, z+\varepsilon) \neq \emptyset\}\bigr| > 0 \,\,\text{for all}\,\,\varepsilon>0 \bigr\},$$
    where $\mathcal C$ is as in (i).
  \end{itemize}
\end{corollary}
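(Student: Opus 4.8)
Here is how I would proceed. The engine behind all four conclusions is the elementary structure of a real analytic function on an interval: each $E_n:\mathcal I_n\to\mathbb R$ is either constant, in which case $\ran E_n$ is a single point and $|\mathcal I_n|>0$, or non‑constant, in which case for every $z\in\mathbb R$ the level set $E_n^{-1}(\{z\})$ is discrete, and, since $E_n'$ is real analytic and not identically zero, $E_n$ restricts to a local $C^1$‑diffeomorphism off the discrete set $\{E_n'=0\}$; consequently $|E_n^{-1}(\mathcal N)|=0$ for every Lebesgue‑null $\mathcal N\subset\mathbb R$. Throughout write $S_\textup{ac}:=\bigl\{z\in\mathbb R\,\big|\,\bigl|\{k\in\mathcal M\mid\sigma_\textup{ac}(A(k))\cap(z-\varepsilon,z+\varepsilon)\neq\emptyset\}\bigr|>0\ \text{for all}\ \varepsilon>0\bigr\}$ for the set occurring in (ii) and (iii). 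First I would verify the hypothesis of Theorem~\ref{theorem_ac}: for non‑constant $E_n$ one has $|E_n^{-1}(\mathcal N\setminus\sigma_\textup{p}(A))|\le|E_n^{-1}(\mathcal N)|=0$, while for constant $E_n\equiv c_n$ the formula~\eqref{point_spectrum} together with $|\mathcal I_n|>0$ gives $c_n\in\sigma_\textup{p}(A)$, so $E_n^{-1}(\mathcal N\setminus\sigma_\textup{p}(A))=\emptyset$; in either case Theorem~\ref{theorem_ac} applies and yields $\sigma_\textup{sc}(A)=\emptyset$.

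Claim (i) then follows directly from~\eqref{point_spectrum}: up to a null set, $\{k\mid z\in\sigma_\textup{p}(A(k))\}=\bigcup_nE_n^{-1}(\{z\})$, and each $E_n^{-1}(\{z\})$ is Lebesgue‑null if $E_n$ is non‑constant and equals $\mathcal I_n$ or $\emptyset$ if $E_n$ is constant, so this union has positive measure exactly when $z\in\ran E_n$ for some $n\in\mathcal C$. For (ii): since $\sigma_\textup{ac}(A(k))\subset\sigma(A(k))$, formula~\eqref{spectrum_direct_integral} gives $S_\textup{ac}\subset\sigma(A)$; continuity of $E_n$ and $|\mathcal I_n|>0$ give $\ran E_n\subset\sigma(A)$, hence $\overline{\bigcup_n\ran E_n}\subset\sigma(A)$ as the spectrum is closed. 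Conversely, if $z\in\sigma(A)$ and $(z-\delta,z+\delta)\cap\bigcup_n\ran E_n=\emptyset$ for some $\delta>0$, then for $\varepsilon<\delta$ the condition $\sigma(A(k))\cap(z-\varepsilon,z+\varepsilon)\neq\emptyset$ reduces to $\sigma_\textup{ac}(A(k))\cap(z-\varepsilon,z+\varepsilon)\neq\emptyset$, so~\eqref{spectrum_direct_integral} forces $z\in S_\textup{ac}$; this gives the two inclusions of (ii).

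For (iii) I would revisit the fibrewise decomposition $E(k)=E(k)_\textup{ac}+\sum_nP_n(k)$ already used in the proof of Theorem~\ref{theorem_ac} (with $P_n(k)$ the spectral projection onto $\ker(A(k)-E_n(k))$), which gives for every $\psi\in\mathscr H$ and Borel $S$
\begin{equation*}
\begin{split}
  \langle E_A(S)\psi,\psi\rangle_{\mathscr H}&=\int_{\mathcal M}\langle E(k)_\textup{ac}(S)\psi(k),\psi(k)\rangle_{\mathscr G}\,\dd k\\
  &\quad+\sum_n\int_{E_n^{-1}(S)}\langle P_n(k)\psi(k),\psi(k)\rangle_{\mathscr G}\,\dd k.
\end{split}
\end{equation*}
Here the first term and every summand with $n\notin\mathcal C$ are absolutely continuous (the latter because $|E_n^{-1}(\mathcal N)|=0$ for null $\mathcal N$), whereas every summand with $n\in\mathcal C$ is a point mass at $c_n$; by uniqueness of the Lebesgue decomposition this \emph{is} the Lebesgue decomposition of $\langle E_A(\cdot)\psi,\psi\rangle$, so $\psi\in\mathscr H_\textup{ac}(A)$ if and only if $P_n(k)\psi(k)=0$ for a.e.\ $k$ and every $n\in\mathcal C$. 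If now $z\notin\overline{\bigcup_{n\notin\mathcal C}\ran E_n}\cup S_\textup{ac}$, choose $\varepsilon>0$ so small that $(z-\varepsilon,z+\varepsilon)$ is disjoint from every $\ran E_n$ with $n\notin\mathcal C$ and that $E(k)_\textup{ac}((z-\varepsilon,z+\varepsilon))=0$ for a.e.\ $k$; then the above shows $E_A((z-\varepsilon,z+\varepsilon))$ annihilates $\mathscr H_\textup{ac}(A)$, so $z\notin\sigma_\textup{ac}(A)$. For the reverse inclusion I would use explicit test vectors: for $z=E_n(k_0)$ with $n\notin\mathcal C$ take $\psi$ supported on a small neighbourhood $V$ of $k_0$ with $\psi(k)$ a unit vector in $\ran P_n(k)$, so that $\langle E_A(\cdot)\psi,\psi\rangle$ is the (absolutely continuous) push‑forward of $\dd k|_V$ under $E_n$, which by continuity of $E_n$ charges every $(z-\varepsilon,z+\varepsilon)$; for $z\in S_\textup{ac}$ take $\psi$ supported on the positive‑measure set $\{k\mid E(k)_\textup{ac}((z-\tfrac{\varepsilon}{2},z+\tfrac{\varepsilon}{2}))\neq0\}$ with $\psi(k)\in\ran E(k)_\textup{ac}((z-\tfrac{\varepsilon}{2},z+\tfrac{\varepsilon}{2}))$, which lies in $\mathscr H_\textup{ac}(A)$ and satisfies $\langle E_A((z-\varepsilon,z+\varepsilon))\psi,\psi\rangle>0$. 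Since $\sigma_\textup{ac}(A)$ is closed, this yields $\overline{\bigcup_{n\notin\mathcal C}\ran E_n}\cup S_\textup{ac}\subset\sigma_\textup{ac}(A)$, completing (iii).

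The step I expect to require the most care is the construction of these test vectors, namely the existence of a \emph{measurable} unit section $k\mapsto\psi(k)$ inside the measurable families of subspaces $\ran P_n(k)$ or $\ran E(k)_\textup{ac}(I)$; this is dealt with by a standard measurable‑selection argument (apply the relevant projection to a fixed countable total subset of $\mathscr G$ and normalise the first non‑vanishing image), and it is entirely vacuous in the application in Section~\ref{sec:2dOp}, where the eigenprojections and eigenfunctions of $H[k]_{\eta,\tau,\lambda}$ are written out explicitly in~\eqref{eq:EF1}–\eqref{eq:EF2}. The only other non‑routine point is the absolute continuity of the push‑forward of Lebesgue measure under a non‑constant real analytic map, which again is immediate from the local‑diffeomorphism property away from the discrete critical set.
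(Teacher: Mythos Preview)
Your argument for $\sigma_{\textup{sc}}(A)=\emptyset$ is essentially the paper's: both verify the hypothesis of Theorem~\ref{theorem_ac} by exploiting that a non-constant real analytic $E_n$ is a local diffeomorphism off the discrete zero set of $E_n'$ (the paper phrases this as a two-step reduction to the case $|E_n'|>0$ and an application of the substitution theorem, but the content is the same), while constant $E_n$ contribute to $\sigma_{\textup{p}}(A)$ and hence give $E_n^{-1}(\mathcal N\setminus\sigma_{\textup{p}}(A))=\emptyset$.

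Where you depart from the paper is in (i)--(iii). The paper simply asserts that these ``follow from \eqref{spectrum_direct_integral} and \eqref{point_spectrum}'' once $\sigma_{\textup{sc}}(A)=\emptyset$ is known, and gives no further detail. Your derivations of (i) and (ii) are straightforward elaborations of that remark. Your treatment of (iii), however, goes genuinely beyond the paper: you identify the Lebesgue decomposition of $\langle E_A(\cdot)\psi,\psi\rangle_{\mathscr H}$ from the fibrewise decomposition $E(k)=E(k)_{\textup{ac}}+\sum_n P_n(k)$, characterise $\mathscr H_{\textup{ac}}(A)$ as $\{\psi:P_n(k)\psi(k)=0\text{ a.e.\ for }n\in\mathcal C\}$, and construct explicit test vectors for both inclusions. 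This is not mere bookkeeping---$\sigma_{\textup{ac}}(A)$ is \emph{not} determined by $\sigma(A)$, $\sigma_{\textup{p}}(A)$ and $\sigma_{\textup{sc}}(A)=\emptyset$ alone (points of $\overline{\sigma_{\textup{p}}(A)}$ may or may not lie in $\sigma_{\textup{ac}}(A)$), so your argument supplies content the paper leaves implicit. The measurable-selection issue you flag is real but, as you note, standard; in the application the eigenfunctions are explicit anyway. One small caveat: the identity $E(k)=E(k)_{\textup{ac}}+\sum_n P_n(k)$ tacitly assumes the $E_n(k)$ are pairwise distinct, which for real analytic (pairwise non-identical) branches fails only on a null set of $k$; this is harmless for all the integrals involved and the paper makes the same tacit assumption in the proof of Theorem~\ref{theorem_ac}.
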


\begin{proof}
  It suffices to prove the statement $\sigma_\textup{sc}(A) = \emptyset$, as the remaining assertions (i)--(iii) then follow from \eqref{spectrum_direct_integral} and~\eqref{point_spectrum}.
  According to Theorem~\ref{theorem_ac} we have to verify $|E_n^{-1}(\mathcal{N} \setminus \sigma_\textup{p}(A))| = 0$ for each measurable set $\mathcal{N}$ with $|\mathcal{N}|=0$. This will be done in two steps.
  
  {\it Step 1:} First, assume that $|E_n'|>0$. This and the substitution theorem imply 
  \begin{equation*}
    \begin{split}
      0  =\big|\mathcal{N} \setminus \sigma_\textup{p}(A)\big|  &= \int_{(\mathcal{N} \setminus \sigma_\textup{p}(A)) \cap \ran E_n} \textup{d} x + \int_{(\mathcal{N} \setminus \sigma_\textup{p}(A)) \setminus \ran E_n} \textup{d} x \\
      &= \int_{E_n^{-1}(\mathcal{N} \setminus \sigma_\textup{p}(A))} |E_n'(k)| \, \textup{d} k + \int_{(\mathcal{N} \setminus \sigma_\textup{p}(A)) \setminus \ran E_n} \textup{d} x\\
      &= \int_{E_n^{-1}(\mathcal{N} \setminus \sigma_\textup{p}(A))} |E_n'(k)| \, \textup{d} k\geq 0.
    \end{split}
  \end{equation*}
  Since $|E_n'|>0$, the last displayed formula  can only be true if $|E_n^{-1}(\mathcal{N} \setminus \sigma_\textup{p}(A))|  = 0$, which shows the assertion.

  {\it Step 2:} 
  Assume now that $E_n$ is an arbitrary real analytic function and that $E_n$ is not constant, as otherwise $E_n$ yields a point in $\sigma_\textup{p}(A)$. 
  Then the zeros of $E_n'$ can only accumulate to the boundary of $\mathcal I_n$ and if $\mathcal J_l$ denote the open intervals in between the zeros of $E_n'$, then
  $\vert E_n'\upharpoonright_{\mathcal J_l}|>0$. Hence, it follows for any measurable set $\mathcal{N}$ with $|\mathcal{N}|=0$ with the argument of {\it Step~1} that
  \begin{equation*}
    |E_n^{-1}(\mathcal{N} \setminus \sigma_\textup{p}(A))| = \sum_{l=1}^\infty |(E_n\upharpoonright_{\mathcal J_l})^{-1}(\mathcal{N} \setminus \sigma_\textup{p}(A))| = 0,
  \end{equation*}
  which yields all claims.
\end{proof}

\end{document}